\providecommand{\U}[1]{\protect\rule{.1in}{.1in}}
\def\theenumi{\arabic{enumi}}
\def\theenumii{\alph{enumii}}
\def\p@enumii{\theenumi.}
\def\theenumiii{\arabic{enumiii}}
\def\p@enumiii{(\theenumi)(\theenumii)}
\def\p@enumiv{\p@enumiii.\theenumiii}
\newcommand{\bbR}{\mathbb R}
\newcommand{\bbC}{\mathbb C}
\newcommand{\diag}{\operatorname{diag}}
\newcommand{\calP}{\mathcal P}
\newtheorem{theorem}{Theorem}[section]
\newtheorem{definition}[theorem]{Definition}
\newtheorem{lemma}[theorem]{Lemma}
\numberwithin{equation}{section}
\setlist{noitemsep,topsep=0pt,parsep=0pt} 
\numberwithin{equation}{section}
\newcommand{\Rmnum}[1]{\expandafter\@slowromancap\romannumeral #1@}
\newenvironment{remark}{\medskip{\bfseries \noindent Remark:}}{\par\medskip}{\par\medskip}
\newcommand{\EVAL}{\operatorname{EVAL}}
\newcommand{\simp}{\operatorname{simp}}
\newcommand{\llbracket}{[\![}
\newcommand{\rrbracket}{]\!]}
\begin{document}

\title{\textbf{A dichotomy for bounded degree graph homomorphisms with nonnegative weights}}

\vspace{0.3in}

\author{Artem Govorov\thanks{Department of Computer Sciences, University of Wisconsin-Madison. Supported by NSF CCF-1714275. 
 } \thanks{Artem Govorov is the author's preferred spelling
of his name,
rather than the official spelling Artsiom Hovarau.
 } \\ {\tt hovarau@cs.wisc.edu}
\and
Jin-Yi Cai\thanks{Department of Computer Sciences, University of Wisconsin-Madison. Supported by NSF CCF-1714275.} \\ {\tt jyc@cs.wisc.edu} 
\and Martin Dyer\thanks{School of Computing, University of Leeds. Supported by EPSRC grant EP/S016562/1.} 
\\ {\tt M.E.Dyer@leeds.ac.uk}
}
\date{}
\maketitle

\bibliographystyle{plainurl}

\begin{abstract}
We consider the complexity of counting weighted graph homomorphisms
defined by a symmetric matrix $A$.
Each symmetric matrix $A$
defines a graph homomorphism function $Z_A(\cdot)$,
also known as the partition function.
Dyer and Greenhill~\cite{Dyer-Greenhill-2000}
established a complexity dichotomy of $Z_A(\cdot)$ for
symmetric $\{0, 1\}$-matrices $A$, and they further proved that its
\#P-hardness part also holds for bounded degree graphs.
Bulatov and Grohe~\cite{Bulatov-Grohe-2005} extended 
the  Dyer-Greenhill dichotomy
to nonnegative symmetric matrices $A$.
However, their hardness proof requires graphs of arbitrarily large degree,
and whether the
bounded degree part of the Dyer-Greenhill dichotomy can be 
extended has been an open problem for 15 years.
We resolve this open problem 
and prove that for nonnegative symmetric $A$,
either $Z_A(G)$ is in polynomial time for all graphs $G$,
or it is \#P-hard for bounded degree (and simple) graphs $G$.
We further extend the complexity dichotomy to include
nonnegative vertex weights.
Additionally, we prove that the \#P-hardness part
of the dichotomy by Goldberg et al.~\cite{Goldberg-et-al-2010} for $Z_A(\cdot)$
also holds for simple graphs,
where $A$ is any real symmetric matrix.
\end{abstract}

\thispagestyle{empty}
\clearpage
\setcounter{page}{1}


\section{Introduction}

The modern study of graph homomorphisms originates from the work by Lov\'asz and
others several decades
ago and has been a very active area~\cite{Lovasz-1967, GH-book}.
If  $G$ and $H$ are two graphs,
a  graph homomorphism (GH) is a mapping $f \colon V(G) \to V(H)$
that  preserves vertex adjacency, i.e.,
whenever $(u, v)$ is an edge in $G$, $(f(u), f(v))$ is also an edge in $H$.
%
Many combinatorial problems on graphs can be expressed as graph homomorphism
problems.
Well-known examples include the problems of finding a proper vertex coloring, vertex cover, independent set and clique.
For example, if $V(H) = \{0, 1\}$ with an edge between $0$ and $1$ and a loop at $0$,
then $f \colon V(G) \to \{0, 1\}$ is a graph homomorphism iff $f^{-1}(1)$ is an independent set in $G$;
 similarly, proper vertex colorings on $G$ using at
most $m$ colors correspond to homomorphisms
 from $G$ to $H = K_m$ (with no loops).

More generally, one can consider weighted graphs  $H$
and aggregate all homomorphisms
from $G$ to $H$ into a weighted sum.  This
is a powerful graph invariant which can express many
graph properties.
Formally, for a symmetric $m \times m$ matrix $A$, 
the \emph{graph homomorphism function} on a graph $G = (V, E)$
is defined as follows:
\[
Z_A(G) = \sum_{\xi: V \rightarrow [m]} \prod_{(u, v) \in E} A_{\xi(u), \xi(v)}.
\]
Note that if $H$ is unweighted, and  $A$ is its $\{0, 1\}$-adjacency 
matrix, then each product $\prod_{(u, v) \in E} A_{\xi(u), \xi(v)}$ is $0$ or $1$,
and is $1$ iff $\xi$ is a graph homomorphism.
Thus in this case  $Z_A(G)$ counts the number of homomorphisms
from $G$ to $H$.
One can further allow $H$ to have vertex weights.
In this case, we can similarly define the 
function $Z_{A, D}(\cdot)$ (see Definition~\ref{def:EVAL(A,D)}).


These sum-of-product functions  $Z_A(\cdot)$ and $Z_{A, D}(\cdot)$
are referred to as the \emph{partition functions} in statistical physics~\cite{baxter-6-8}.
Various special cases of GH have been studied there extensively,
which include the Ising, Potts, hardcore gas, Beach, Widom-Rowlinsom models, etc.~\cite{baxter-6-8}.

The computational complexity of $Z_A(\cdot)$
has been studied systematically.
Dyer and Greenhill~\cite{Dyer-Greenhill-2000, Dyer-Greenhill-corrig-2004} 
proved that, for a symmetric $\{0, 1\}$-matrix $A$,
$Z_A(\cdot)$ is either in polynomial time or \#P-complete,
and they gave a succinct condition for this complexity dichotomy:
if $A$ satisfies the condition then $Z_A(\cdot)$ is computable in
polynomial time (we also call it \emph{tractable}), otherwise
it is \#P-complete.
Bulatov and Grohe~\cite{Bulatov-Grohe-2005} 
(see also~\cite{Thurley-2009, Grohe-Thurley-2011}) 
generalized the Dyer-Greenhill dichotomy to
 $Z_A(\cdot)$ for nonnegative symmetric matrices $A$.
It was further extended by Goldberg et al.~\cite{Goldberg-et-al-2010} 
to arbitrary real symmetric matrices,
and finally by Cai, Chen and Lu~\cite{Cai-Chen-Lu-2013} to arbitrary 
complex symmetric matrices.
In the last two dichotomies,
the tractability criteria are not trivial to state.
Nevertheless, both tractability criteria are decidable in polynomial time
(in the size of $A$).

The definition of the partition function $Z_A(\cdot)$
can be easily extended to
directed graphs $G$ and arbitrary (not necessarily symmetric) matrices $A$
corresponding to directed edge weighted graphs $H$.
Concerning the complexity of counting directed GH,
we currently have the \emph{decidable} dichotomies
by Dyer, Goldberg and Paterson~\cite{Dyer-Goldberg-Paterson-2007}
for $\{0, 1\}$-matrices corresponding to (unweighted) simple acyclic graphs $H$,
and by Cai and Chen~\cite{Cai-Chen-2019} for all nonnegative 
matrices $A$.

Dyer and Greenhill in the same paper~\cite{Dyer-Greenhill-2000}
proved a stronger statement that
if a $\{0, 1\}$-matrix $A$ fails the tractability condition then
 $Z_A(G)$  is \#P-complete even when restricted to 
bounded degree graphs $G$.
We note that the complexity of GH for bounded degree graphs
is particularly interesting as much work has been done on the
approximate complexity of GH focused on bounded degree graphs
 and approximate algorithms
are achieved for them~\cite{DyerFJ02,Weitz06,Sly10,SinclairST12,LiLY13,Barvinok-book,Barvinok-Soberon-2017,Peters-Regts-2018,HelmuthPR19,Peters-Regts-2018}.
However, for fifteen years  
the worst case complexity for bounded degree graphs
in the Bulatov-Grohe dichotomy
was open.
Since this dichotomy is used essentially
in almost all subsequent work, e.g.,~\cite{Goldberg-et-al-2010,Cai-Chen-Lu-2013},
this has been a stumbling block.
 
Our main contribution in this paper is to resolve this 15-year-old open problem. 
We prove
that the \#P-hardness part of
 the Bulatov-Grohe dichotomy still holds
 for \emph{bounded degree graphs}.
It can be further strengthened to apply to 
bounded degree \emph{simple} graphs.
We actually prove a broader
dichotomy for $Z_{A, D}(\cdot)$,
where in addition to the nonnegative symmetric edge weight matrix $A$
there is also a nonnegative diagonal vertex weight matrix $D$.
We will give an explicit tractability condition 
such that, if $(A, D)$ satisfies the condition then
 $Z_{A, D}(G)$ is computable in polynomial time for all $G$, 
and if it fails the condition then
$Z_{A, D}(G)$ is  \#P-hard even restricted to
\emph{bounded degree simple graphs} $G$.
$Z_A(G)$ is the special case of  $Z_{A, D}(G)$ when $D$ is
the identity matrix.
Additionally, we prove that
the \#P-hardness part of the dichotomy by Goldberg et al.~\cite{Goldberg-et-al-2010}
for all real symmetric edge weight matrices $A$ still holds for \emph{simple graphs}.
(Although in this case, whether under the same condition
on $A$ the \#P-hardness still holds
 for bounded degree  graphs is  not resolved in the present paper.)

In order to prove the dichotomy theorem on bounded degree graphs,
we have to introduce a nontrivial extension of the well-developed
interpolation method~\cite{Valiant}.
%
We use some of the well-established techniques in this area of research
such as stretchings and thickenings. But the main innovation
is an overall design of the interpolation for
a more abstract target polynomial than $Z_{A, D}$. 
To carry out the proof there is an initial condensation step
where we combine vertices that have 
proportionately the same neighboring edge wrights
(technically defined by pairwise linear dependence) into a super vertex with
a combined vertex weight. Note that this creates vertex weights
even when initially all vertex weights are 1. When vertex weights are present, 
an approach in interpolation proof
 is to arrange things well so that
in the end one can redistribute  vertex weights to edge weights.
However, when edge weights are not 0-1,
any gadget design must deal with a quantity at 
each vertex that cannot be \emph{redistributed}.
This dependence has the form
$\sum_{j = 1}^{m_{\zeta(w)}} \alpha_{\zeta(w) j} \mu_{\zeta(w) j}^{\deg(w)}$,
resulting from combining pairwise linearly dependent rows and columns,
that depends on vertex degree $\deg(w)$
in a complicated way.
(We note that in the  0-1 case all $\mu_{\zeta(w) j} \in \{0, 1\}$, making
it in fact degree \emph{independent}.)

We overcome this difficulty by essentially introducing
a virtual level of interpolation---an interpolation to realize
some ``virtual gadget'' that cannot be physically realized, and yet
its ``virtual'' vertex weights are suitable for redistribution.
Technically we have to define an auxiliary graph $G'$,
and express
 the  partition function
in an extended framework, called
$Z_{\mathscr A, \mathscr D}$ on $G'$ (see Definition~\ref{def:EVAL(scrA,scrD)}).
In a typical interpolation proof, there is a polynomial
with coefficients that have a clear combinatorial meaning
defined in terms of $G$, usually consisting of certain
sums of exponentially many terms in some target partition function.
Here, we will define a target polynomial
with certain coefficients;
however these coefficients do not have a direct combinatorial meaning
in terms of $Z_{A, D}(G)$, but rather they only have
a  direct combinatorial meaning in terms of 
$Z_{\mathscr A, \mathscr D}$ on $G'$.
In 
a suitable  ``limiting'' sense, 
a certain aggregate of these coefficients
forms some useful quantity in the final result.
This introduces a concomitant
``virtual'' vertex weight which depends on the vertex degree that is ``just-right'' 
so that it can be redistributed to become part of the incident edge weight,
thus effectively killing the vertex weight.
This leads to a reduction from $Z_{C}(\cdot)$ 
(without vertex weight) to $Z_{A, D}(\cdot)$, for some $C$ that inherits
the hardness condition of $A$,
thus proving the  \#P-hardness  of the latter.
This high level description will be made clearer
in Section~\ref{sec:Hardness-proof}. 
The nature of the degree dependent vertex weight
introduces a substantial difficulty; 
in particular
a direct adaptation of the proof
in~\cite{Dyer-Greenhill-2000}
does not work.


%

Our extended vertex-weighted version of the 
Bulatov-Grohe dichotomy can be used
to correct a crucial gap in the proof by Thurley~\cite{Thurley-2010}
for a dichotomy for $Z_A(\cdot)$ with Hermitian edge 
weight matrices $A$, where this degree dependence was also at 
the root of the 
difficulty.~\footnote{In~\cite{Thurley-2010}, the proof of Lemma~4.22 
uses Lemma~4.24. In Lemma~4.24, $A$ is assumed to have
pairwise linearly independent rows while Lemma~4.22 does not assume this,
and the author appeals to
a twin reduction step in~\cite{Dyer-Greenhill-2000}. However, unlike
in the 0-1 case~\cite{Dyer-Greenhill-2000},  
such a step incurs degree dependent vertex weights.
This gap is fixed by  our Theorem~\ref{thm:bd-hom-nonneg}.}

%
%
%
%
%

\section{Preliminaries}

In order to state all our complexity results in the strict notion of Turing
computability, we adopt the standard model~\cite{Lenstra-1992} of computation for
partition functions, and require that all numbers be from an arbitrary but fixed
algebraic extension of $\mathbb{Q}$. We use
 $\bbR$ and $\bbC$ to denote the sets
of real and complex algebraic numbers.
Many statements remain true in other fields or rings
if arithmetic operations can be carried out efficiently
in a model of computation
(see~\cite{cai-chen-book} for more discussions on this issue).

%

For a positive integer $n$, we use $[n]$
to denote the set $\{1, \ldots, n \}$.
When $n = 0$, $[0] = \emptyset$.
We use $[m:n]$, where $m \le n$,
to denote $\{ m, m + 1, \ldots, n \}$.

In this paper, we consider undirected graphs
unless stated otherwise.
%
%
Following standard definitions,
the graph $G$ is allowed to have multiple edges but no loops.
(However, we will touch on this issue a few times when
$G$ is allowed to have loops.)
The graph $H$ can have multiple edges and loops,
or more generally, edge weights.
For the graph $H$, we treat its loops as edges.

An edge-weighted graph $H$ on $m$ vertices can be identified with
a symmetric $m \times m$ matrix $A$ in the obvious way.
We write this correspondence by $H = H_A$ and $A = A_H$.

\begin{definition}\label{def:EVAL(A)}
Let $A \in \bbC^{m \times m}$ be a symmetric matrix.
The problem $\EVAL(A)$ is defined as follows:
Given an undirected graph $G = (V, E)$, compute
\[
Z_A(G) = \sum_{\xi: V \rightarrow [m]} \prod_{(u, v) \in E} A_{\xi(u), \xi(v)}.
\]
\end{definition}
The function $Z_A(\cdot)$ is called a \emph{graph homomorphism function} or a \emph{partition function}.
When $A$ is a symmetric $\{0, 1\}$-matrix, i.e.,
when the graph $H = H_A$ is unweighted,
 $Z_A(G)$ counts
the number of homomorphisms from $G$ to $H$.
In this case, we denote $\EVAL(H) = \EVAL(A_H)$, and this problem is also
known as the \#$H$-coloring problem.


\begin{theorem}[Dyer and Greenhill \cite{Dyer-Greenhill-2000}]\label{thm:Dyer-Greenhill}
Let $H$ be a fixed undirected graph.
Then $\EVAL(H)$ is in polynomial time
if every connected component of $H$ is either
(1) an isolated vertex, or (2) a complete graph with all loops present,
or (3) a complete bipartite graph with no loops present.
Otherwise, the problem $\EVAL(H)$ is \#P-complete.
\end{theorem}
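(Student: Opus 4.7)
The plan is to prove tractability and hardness separately. For the tractability direction, homomorphisms $G \to H$ factor through connected components: each connected component of $G$ must map into a single connected component of $H$. If that component $H_i$ is an isolated vertex, only isolated vertices of $G$ can be sent there; if it is a complete graph with all loops on $m_i$ vertices, every map is a homomorphism and a component $C$ of $G$ contributes $m_i^{|V(C)|}$; if it is a complete bipartite $K_{a_i,b_i}$, then $C$ must be bipartite and, with parts of sizes $p,q$, contributes $a_i^p b_i^q + a_i^q b_i^p$. Since the number of components of both $G$ and $H$ is polynomial and each local count is computable in polynomial time, $Z_H(G)$ can be assembled in polynomial time by summing the product of local contributions over all assignments of $G$-components to $H$-components.

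For hardness, the target is the \#P-complete problem of counting independent sets, whose graph-homomorphism matrix is $\trans{1}{1}{1}{0}$. A pinning-style reduction first isolates a single non-tractable component of $H$, so one may assume $H$ is connected and fails all three conditions. The core tools are thickening (replace each edge of $G$ by $k$ parallel copies, so that $A_H$ is replaced by its entrywise $k$-th power), stretching (subdivide each edge of $G$ into a path of length $k$, so that $A_H$ is replaced by the matrix power $A_H^k$), and polynomial interpolation in the style of Valiant: evaluating $Z_H$ on a family of gadget-parameterized instances and inverting a Vandermonde-type system recovers values of partition functions for smaller derived matrices. A preliminary twin-reduction identifies vertices of $H$ with identical neighborhoods; afterwards one may assume $A_H$ has pairwise linearly independent rows.

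With this setup, the case analysis proceeds by structural dichotomy on the reduced $H$. If $H$ is non-bipartite, then either some vertex has no loop (creating the loop/non-loop asymmetry characteristic of the hardcore model) or every vertex has a loop but $H$ is not complete, in which case a pair of non-adjacent loops yields the same asymmetry after stretching. If $H$ is bipartite, then either it contains a loop (so some component is not a loopless complete bipartite graph and one extracts a contradictory substructure) or it is loopless with a missing cross-edge, and stretching plus thickening transfers this missing edge into the hardcore signature. The hard part is ensuring that each constructed gadget yields a non-singular interpolation matrix; this reduces to showing that the spectral parameters produced by iterating stretching and the entrywise parameters produced by thickening remain pairwise distinct in each subcase, which is handled by a careful choice of gadget parameters combined with the linear-independence guarantee from twin-reduction. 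Composing these reductions gives a polynomial-time reduction from counting independent sets to $\EVAL(H)$, completing the dichotomy.
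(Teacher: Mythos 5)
The paper does not prove this theorem; it is cited from Dyer and Greenhill (2000), so there is no in-paper proof to compare against, and you are being judged against the known argument. Your tractability half is essentially right (though you should phrase the assembly as $Z_H(G)=\prod_{C}\sum_{i}Z_{H_i}(C)$, a polynomial-size product of polynomial-size sums, rather than "summing over all assignments of $G$-components to $H$-components," which literally read is an exponential sum).

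For the hardness half, your outline names the right tools (twin-reduction, thickening, stretching, Vandermonde interpolation), but the places you call "the hard part" are exactly where the proof lives, and you defer them. At least one of your concrete claims does not hold as stated: when every vertex of $H$ has a loop but $H$ is not complete, you say a pair of non-adjacent looped vertices "yields the same asymmetry after stretching." The induced $2\times 2$ submatrix on two non-adjacent looped vertices is $I_2$, which is block-rank-$1$ and tractable; stretching replaces $A_H$ by $A_H^k$, whose $\{u,v\}$-submatrix has no reason to be the hardcore matrix $\trans{1}{1}{1}{0}$ or even a hard $2\times 2$ matrix. One actually needs the Cauchy--Schwartz/nondegeneracy argument (the content of Lemmas~\ref{lem:ADA-pairwise-lin-ind} and~\ref{lem:ADA-nondeg-thick} in this paper's appendix, adapted from Dyer--Greenhill), together with an interpolation that isolates a hard principal submatrix with vertex weights and then eliminates those weights. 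Similarly, "$H$ is bipartite and contains a loop" is vacuous under the usual definition, so that branch of your case analysis needs to be restated, and the closing assertion that "composing these reductions gives a polynomial-time reduction from counting independent sets to $\EVAL(H)$" is the entire theorem and is never actually constructed. As written this is a proof plan, not a proof.
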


Bulatov and Grohe~\cite{Bulatov-Grohe-2005} extended Theorem~\ref{thm:Dyer-Greenhill} to
$\EVAL(A)$ where $A$ is a symmetric matrix with nonnegative entries.
In order to state their result, we need to define a few notions first.

We say
a nonnegative 
symmetric $m \times m$ matrix $A$ is rectangular if
there are pairwise disjoint nonempty subsets of $[m]$:
$T_1, \ldots, T_r, P_1, \ldots, P_s, Q_1, \ldots, Q_s$,
for some $r, s\ge 0$, such that $A_{i, j} > 0$ iff
\[
(i, j) \in \bigcup_{k \in [r]} (T_k \times T_k) \cup \bigcup_{l \in [s]} [(P_l \times Q_l) \cup (Q_l \times P_l)].
\]
We refer to $T_k \times T_k, P_l \times Q_l$ and $Q_l \times P_l$ as blocks of $A$.
Further, we say a nonnegative symmetric matrix $A$ is \textit{block-rank-$1$}
if $A$ is rectangular and every block of $A$ has rank one.

\begin{theorem}[Bulatov and Grohe \cite{Bulatov-Grohe-2005}]\label{thm:Bulatov-Grohe}
Let $A$ be a symmetric matrix with nonnegative entries.
Then $\EVAL(A)$ is in polynomial time if $A$ is block-rank-$1$,
and is \#P-hard otherwise.
\end{theorem}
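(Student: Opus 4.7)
If $A$ is block-rank-$1$, I would compute $Z_A(G)$ by decomposing $G$ into connected components (across which $Z_A$ factorizes) and, for each component $C$, enumerating over the choice of which block of $A$ the image of $\xi$ lies in. For a diagonal block $T_k\times T_k$ of rank one, write the restriction as $A_{i,j}=u_i u_j$ on $T_k$; then each edge contributes $u_{\xi(u)}u_{\xi(v)}$, so the sum over $\xi\colon V(C)\to T_k$ factorizes as $\prod_{v\in V(C)}\sum_{i\in T_k} u_i^{\deg_C(v)}$. Off-diagonal pairs $P_l\times Q_l \cup Q_l\times P_l$ contribute only for bipartite $C$; enumerating the two bipartitions and using the rank-one factorization $A_{i,j}=p_i q_j$ on $P_l\times Q_l$ gives an analogous product formula. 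Summing over blocks and multiplying across components yields a polynomial-time algorithm.

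\textbf{Hardness strategy.} For the converse, suppose $A$ is not block-rank-$1$; I would aim to reduce a known \#P-hard partition function to $\EVAL(A)$ in two stages. First, perform a \emph{twin reduction}: collapse each maximal class of pairwise nonnegatively proportional rows (and the corresponding columns) of $A$ into a single representative, producing a smaller matrix $A'$ with pairwise linearly independent positive rows. Second, analyze $A'$. If its $0/1$ support already violates the tractable cases of Theorem~\ref{thm:Dyer-Greenhill}---isolated vertices, loop-complete graphs, and complete bipartite graphs---then the support matrix is \#P-hard and this hardness can be transferred to $A'$, and hence to $A$, via standard gadgetry. Otherwise the support is tractable but, since $A$ was not block-rank-$1$, some block of $A'$ must have rank at least $2$; this rank-$2$ obstruction I would amplify using \emph{thickenings} (replacing each edge with $k$ parallel edges, which replaces $A$ with the entrywise $k$th power) and \emph{stretchings} (replacing each edge with a length-$k$ path, which gives the matrix power $A^k$) to isolate a $2\times 2$ positive submatrix whose partition function is already known to be \#P-hard.

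\textbf{Main obstacle.} The crux, as flagged in the introduction, is the twin reduction itself. Merging a class of rows with proportionality ratios $\mu_1,\dots,\mu_s$ multiplies the contribution of a vertex $v$ mapped to that class by a \emph{degree-dependent} factor $\sum_{j=1}^{s}\alpha_j\mu_j^{\deg(v)}$. In the $\{0,1\}$ world of Dyer and Greenhill all $\mu_j\in\{0,1\}$, so this factor is just the cardinality of the class and carries no dependence on $\deg(v)$; with general nonnegative weights it is a genuine polynomial in $\deg(v)$ that cannot be absorbed into edge weights by any local redistribution. I expect this is where almost all of the difficulty lies: either one must permit arbitrarily large vertex degrees so that a dominant $\mu_j$ swamps the others in the limit (Bulatov and Grohe's original route, which is precisely what keeps them from a bounded-degree statement), or one must build a more delicate polynomial-interpolation framework that recovers the target partition function in the presence of these unavoidable degree-dependent vertex weights---the approach the present paper develops in order to upgrade the dichotomy to bounded (and simple) graphs.
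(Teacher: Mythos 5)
This statement is a \emph{cited} result in the paper (attributed to Bulatov and Grohe), and the paper provides no proof of it; rather, it is invoked as a black box at the very end of Section~\ref{sec:Hardness-proof} to close the reduction $\EVAL(C)\le_{\mathrm T}^{\mathrm P}\EVAL_{\simp}^{(\Delta)}(A,D)$. So there is no paper proof for your proposal to be compared against.

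Judged on its own terms, your proposal is a plausible high-level outline but has two concrete gaps. First, the claim that when the $\{0,1\}$-support of $A'$ fails the Dyer--Greenhill condition ``this hardness can be transferred to $A'$\dots via standard gadgetry'' is not a valid step: there is no generic reduction from $\EVAL(H)$, where $H$ is the $\{0,1\}$-support, to $\EVAL(A')$ when $A'$ has arbitrary positive entries on that support. Bulatov and Grohe do not argue this way; when $A'$ is not rectangular they build explicit gadgets that isolate a hard $2\times 2$ weighted substructure, rather than invoking the unweighted theorem. Second, you correctly identify the degree-dependent vertex-weight factor $\sum_j\alpha_j\mu_j^{\deg(v)}$ arising from twin reduction as the central obstacle, but you do not resolve it---you explicitly leave it at ``either one must permit arbitrarily large vertex degrees\dots or one must build a more delicate polynomial-interpolation framework.'' For the unbounded-degree statement actually being asked, one must commit to the first option and carry it out (e.g., via high-power stretching $A'(DA')^{n-1}$ for large $n$ to let a dominant $\mu_j$ swamp the rest, or equivalently interpolation in $n$ with no degree bound); as written the proposal names the difficulty without closing it. I would also note that twin reduction is not a forced first step for the unbounded-degree Bulatov--Grohe theorem---you have imported it from the present paper's bounded-degree framing, where it is genuinely necessary---and introducing it here creates exactly the obstacle you then cannot remove.
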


There is a natural extension of $\EVAL(A)$
involving the use of vertex weights. Both
papers~\cite{Dyer-Greenhill-2000,Bulatov-Grohe-2005} use them
in their proofs.
A graph $H$ on $m$ vertices with vertex and edge weights is
 identified with
a symmetric $m \times m$ edge weight matrix $A$ and
a diagonal $m \times m$ vertex weight matrix $D =
\diag(D_1, \ldots, D_m)$ in a natural way.
Then the problem $\EVAL(A)$ can be generalized
to $\EVAL(A, D)$ for vertex-edge-weighted graphs.

\begin{definition}\label{def:EVAL(A,D)}
Let $A \in \bbC^{m \times m}$ be a symmetric matrix
and $D \in \bbC^{m \times m}$ a diagonal matrix.
The problem $\EVAL(A, D)$ is defined as follows:
Given an undirected graph $G = (V, E)$, compute
\[
Z_{A, D}(G) = \sum_{\xi: V \rightarrow [m]} \prod_{w \in V} D_{\xi(w)} \prod_{(u, v) \in E} A_{\xi(u), \xi(v)}.
\]
\end{definition}
Note that $\EVAL(A)$ is the special case $\EVAL(A, I_m)$.
We also need to define another $\EVAL$ problem where
the vertex weights are specified by the 
degree.

\begin{definition}\label{def:EVAL(A,frakD)}
Let $A \in \bbC^{m \times m}$ be a symmetric matrix and
$\mathfrak D = \{ D^{\llbracket i \rrbracket} \}_{i = 0}^\infty$ a sequence of diagonal matrices in $\bbC^{m \times m}$.
The problem $\EVAL(A, \mathfrak D)$ is defined as follows:
Given an undirected graph $G = (V, E)$, compute
\[
Z_{A, \mathfrak D}(G) = \sum_{\xi: V \rightarrow [m]} \prod_{w \in V} D_{\xi(w)}^{\llbracket \deg(w) \rrbracket} \prod_{(u, v) \in E} A_{\xi(u), \xi(v)}.
\]
\end{definition}

Finally, we need to define a general $\EVAL$ problem,
where the vertices and edges can individually take specific weights.
Let $\mathscr A$ be a set of (edge weight) $m \times m$ matrices and
$\mathscr D$ a set of diagonal (vertex weight) $m \times m$ matrices.
A GH-grid $\Omega = (G, \rho)$ consists of
a graph $G = (V, E)$ with possibly both directed and undirected edges,
and loops,
and $\rho$ assigns to each edge $e \in E$ or loop an 
$A^{(e)} \in \mathscr A$
and  to each vertex $v \in V$
a $D^{(v)} \in \mathscr D$.
(A loop is just an edge of the form $(v,v)$.)
 If $e \in E$
is a directed edge then the tail and head correspond to rows and columns of 
$A^{(e)}$, respectively;
if  $e \in E$
is an undirected edge then $A^{(e)}$ must be
symmetric. 

\begin{definition}\label{def:EVAL(scrA,scrD)}
The problem $\EVAL(\mathscr A, \mathscr D)$ is defined as follows:
Given a GH-grid $\Omega = \Omega(G)$, compute
\[
Z_{\mathscr A, \mathscr D}(\Omega) =
\sum_{\xi \colon V \to [m]} \prod_{w \in V} D_{\xi(w)}^{(w)} \prod_{e = (u, v) \in E} A_{\xi(u), \xi(v)}^{(e)}
\]
\end{definition}


We remark that $Z_{\mathscr A, \mathscr D}$ is introduced 
only as a tool to express a certain quantity in a ``virtual''
interpolation;
the dichotomy theorems do not
apply to this.
Defintions~\ref{def:EVAL(A,frakD)} and~\ref{def:EVAL(scrA,scrD)}
are carefully crafted in order to carry out
the \#P-hardness part of the proof of Theorem~\ref{thm:bd-hom-nonneg}.
Notice that the problem $\EVAL(\mathscr A, \mathscr D)$ generalizes both
problems $\EVAL(A)$ and $\EVAL(A, D)$, by taking $\mathscr A$ to
be a single symmetric matrix, and 
by taking $\mathscr D$ to be a single diagonal matrix.
But $\EVAL(A, \mathfrak D)$ is not naturally
expressible as  $\EVAL(\mathscr A, \mathscr D)$ because the
latter does not force the 
vertex-weight matrix on a vertex according to its degree.

We refer to $[m]$ as the domain of the corresponding $\EVAL$ problem.
If $\mathscr A = \{ A \}$ or $\mathscr D = \{ D \}$, then 
we simply write $Z_{A, \mathscr D}(\cdot)$ or $Z_{\mathscr A, D}(\cdot)$, respectively.
%
%

We use a superscript $(\Delta)$ and/or a  subscript $\simp$
to  denote the restriction of a corresponding
$\EVAL$ problem
to  degree-$\Delta$ bounded graphs
and/or simple graphs. 
E.g., $\EVAL^{(\Delta)}(A)$ denotes the problem $\EVAL(A)$
restricted to degree-$\Delta$ bounded graphs,
 $\EVAL_{\simp}(A, \mathfrak D)$ denotes the problem
$\EVAL(A, \mathfrak D)$ restricted to simple graphs,
and both restrictions apply in $\EVAL_{\simp}^{(\Delta)}(A, \mathfrak D)$.

Working within the framework of  $\EVAL(A, D)$,
we define an edge gadget to be a graph with two distinguished vertices, 
called $u^*$ and $v^*$.
An edge gadget $G = (V, E)$  has a signature (edge weight matrix) 
expressed by an $m \times m$ matrix $F$, where
\[
F_{i j} = \sum_{\substack{\xi \colon V \to [m] \\ \xi(u^*) = i,\, \xi(v^*) = j}} \prod_{z \in V \setminus \{ u^*, v^* \}} D_{\xi(z)} \prod_{(x, y) \in E} A_{\xi(x), \xi(y)}
\]
for $1 \le i, j \le m$.
When this gadget is placed in a graph identifying  $u^*$ and $v^*$
with two vertices  $u$ and $v$ in that graph, then $F$ is the signature
matrix for the pair $(u, v)$.
Note that the vertex weights corresponding to $u$ and $v$
are excluded from the product in the definition of $F$.
Similar definitions can be introduced for 
$\EVAL(A)$, $\EVAL(A, \mathfrak D)$ and $\EVAL(\mathscr A, \mathscr D)$.
%
%

We use $\le_{\mathrm T}^{\mathrm P}$ (and $\equiv_{\mathrm T}^{\mathrm P}$)
to denote polynomial-time Turing reductions (and equivalences, respectively).


Two simple 
operations are known as \emph{thickening} and \emph{stretching}.
%
Let $p, r \ge 1$ be integers.
A $p$-\emph{thickening} of an edge 
replaces it by $p$ parallel edges,
and
a $r$-\emph{stretching} replaces it by a path
of length $r$.
In both cases we retain the endpoints $u, v$.
The $p$-\emph{thickening} or $r$-\emph{stretching}
 of $G$ with respect to $F \subseteq E(G)$,
denoted respectively by $T_p^{(F)}(G)$   and $S_r^{(F)}(G)$,
are
obtained by $p$-\emph{thickening} or  $r$-\emph{stretching} 
 each edge from $F$, respectively.
Other edges, if any, are unchanged in both cases.
When $F = E(G)$, we call them the $p$-\emph{thickening} and $r$-\emph{stretching} of $G$
and denote them by $T_p(G)$ and $S_r(G)$, respectively.
$T_p e$ and $S_r e$ are the special cases when the graph consists of
a single edge $e$.
See Figure \ref{fig:thickening-stretching} for an illustration.
Thickenings and stretchings  can be combined in any order.
Examples are shown in Figure~\ref{fig:thickenings-stretchings-composition}.


\begin{figure}[t]
\begin{subfigure}{0.5\textwidth}
\centering
\includegraphics{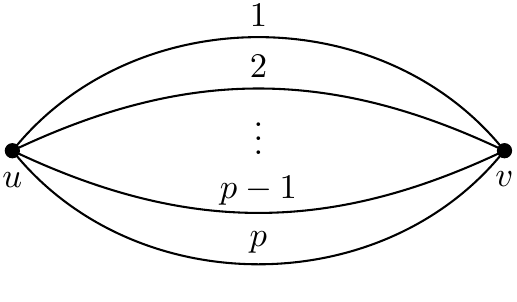}
\end{subfigure}\hfill
\begin{subfigure}{0.5\textwidth}
\centering
\includegraphics{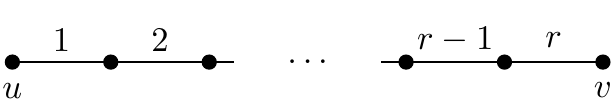}
\end{subfigure}
\caption{\label{fig:thickening-stretching}The thickening $T_p e$ and the stretching $S_r e$ of an edge $e = (u, v)$.}
\end{figure}


\begin{figure}[t]
\begin{subfigure}{0.6\textwidth}
\centering
\includegraphics{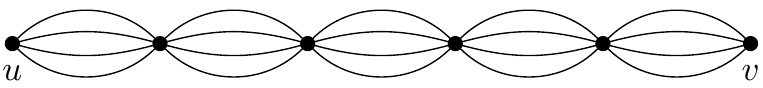}
\end{subfigure}\hfill
\begin{subfigure}{0.4\textwidth}
\centering
\includegraphics{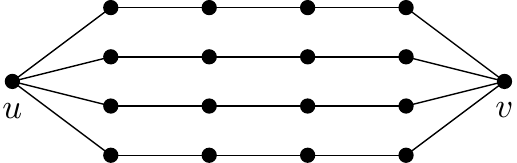}
\end{subfigure}
\caption{\label{fig:thickenings-stretchings-composition}The graphs $T_4 S_5 e$ (on the left) and $S_5 T_4 e$ (on the right) where $e = (u, v)$.}
\end{figure}

For a matrix $A$, we denote by $A^{\odot p}$ the matrix
obtained by replacing each entry of $A$ with its $p$th power.
Clearly, $Z_A(T_p G) = Z_{A^{\odot p}}(G)$ and $Z_A(S_r G) = Z_{A^r}(G)$.
More generally, for the vertex-weighted case,
we have $Z_{A, D}(T_p G) = Z_{A^{\odot p}, D}(G)$ and $Z_{A, D}(S_r G) = Z_{A (D A)^{r - 1}, D}(G)$.
Here $(D A)^0 = I_m$ if $A$ and $D$ are $m \times m$.

\section{Dichotomy for bounded degree graphs}

In addition to the Dyer-Greenhill dichotomy (Theorem~\ref{thm:Dyer-Greenhill}),
in the same paper~\cite{Dyer-Greenhill-2000} 
they also
proved that  the \#P-hardness part of their dichotomy 
holds for bounded degree graphs.
The bounded degree case of the Bulatov-Grohe dichotomy 
(Theorem~\ref{thm:Bulatov-Grohe}) 
was left open,
and all known proofs~\cite{Bulatov-Grohe-2005,Thurley-2009,Grohe-Thurley-2011} of its \#P-hardness
part require unbounded degree graphs.
All subsequent dichotomies that use the Bulatov-Grohe dichotomy,
e.g.,~\cite{Goldberg-et-al-2010,Cai-Chen-Lu-2013} also explicitly
or implicitly
(because of their dependence on the Bulatov-Grohe dichotomy) 
 require unbounded degree graphs.
In this paper, we extend the \#P-hardness part of the Bulatov-Grohe dichotomy 
to bounded degree graphs.
\begin{theorem}\label{thm:bd-hom-nonneg-weak}
Let $A$ be a symmetric nonnegative matrix.
If $A$ is not block-rank-$1$, then for some $\Delta > 0$,
the problem $\EVAL^{(\Delta)}(A)$ is \#P-hard.
\end{theorem}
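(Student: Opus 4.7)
The plan is to build a polynomial-time Turing reduction from $\EVAL^{(\Delta')}(C)$, for some symmetric nonnegative matrix $C$ that is still not block-rank-$1$ but is substantially simpler than $A$, to $\EVAL^{(\Delta)}(A)$; iterating and appealing to the bounded-degree part of Theorem~\ref{thm:Dyer-Greenhill} as the base case then yields the \#P-hardness. The starting point is a \emph{condensation} step: partition $[m]$ into classes where $i \sim j$ whenever rows $A_i$ and $A_j$ are positively proportional (all zero rows forming one class). Within a class $c$, every row has the form $\mu_{c,j} R_c$ for a common ``direction'' $R_c$ and positive scalars $\mu_{c,j}$ with multiplicities $\alpha_{c,j}$. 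Summing out at each vertex $w$ the choice of representative within its assigned class yields a degree-dependent vertex weight
\[
W_c(\deg(w)) \;=\; \sum_{j=1}^{m_c} \alpha_{c,j}\, \mu_{c,j}^{\deg(w)},
\]
so that $Z_A(G)$ takes the shape $Z_{\bar A,\mathfrak D}(G)$ from Definition~\ref{def:EVAL(A,frakD)} for the condensed edge-weight matrix $\bar A$.

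The key obstacle, absent in the 0-1 setting where every $\mu_{c,j} \in \{0,1\}$ makes $W_c(\cdot)$ degree-independent, is that $W_c(d)$ cannot be split uniformly onto the incident edges. I propose to resolve this by a \emph{virtual} interpolation inside the extended framework of Definition~\ref{def:EVAL(scrA,scrD)}. Given a bounded-degree input $G$, I construct an auxiliary family of GH-grids $\Omega_t = (G', \rho_t)$ for $t$ ranging over a fixed constant-size set, by attaching to each vertex of $G$ a small ``probing'' gadget (built via a controlled combination of stretchings $S_r$ and thickenings $T_p$) whose effective local signature depends on $t$. Oracle queries $Z_A(\Omega_t)$ then produce a linear system whose matrix is Vandermonde-like in quantities built from the $\mu_{c,j}$'s, and inverting it by polynomial interpolation over an $O(1)$ number of values of $t$ yields in polynomial time the ``pure'' quantity
\[
Y(G) \;=\; \sum_{\xi \colon V \to [\bar m]}\prod_{w \in V} \mu_{\xi(w),\,k(\xi(w))}^{\deg(w)} \prod_{(u,v) \in E} \bar A_{\xi(u),\xi(v)},
\]
in which each class $c$ has been collapsed onto a single representative index $k(c)$. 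Because $t$ and the gadgets are all constant size, every $\Omega_t$ stays within a uniform degree bound $\Delta$.

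The point of $Y(G)$ is that it is now \emph{redistribution-friendly}: the factor $\mu_{\xi(w),k(\xi(w))}^{\deg(w)}$ splits as $\prod_{e \ni w} \mu_{\xi(w),k(\xi(w))}$ over the incident edges, so absorbing one $\mu$-factor from each endpoint into every edge converts $Y(G)$ into $Z_C(G)$ for the symmetric nonnegative matrix $C$ with $C_{c,c'} = \mu_{c,k(c)}\,\mu_{c',k(c')}\,\bar A_{c,c'}$ and no vertex weights. The step I expect to be the main obstacle is verifying that the non-block-rank-$1$ hypothesis on $A$ transfers to at least one such $C$, where we are free to vary the finitely many choices of representatives $k(\cdot)$. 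The positive-support pattern of $C$ is controlled by that of $\bar A$, which in turn inherits from $A$, and any block-rank-$1$ decomposition of $C$ should be pushable back through the positive diagonal $\mu$-scaling to one of $A$, contradicting the hypothesis; but making this argument uniform across all condensation patterns and ruling out degenerate combinations requires care. Once such a $C$ is found, either an induction on $m$ or a direct appeal to the bounded-degree 0-1 case (Theorem~\ref{thm:Dyer-Greenhill}) supplies \#P-hardness of $\EVAL^{(\Delta')}(C)$, completing the reduction. A parallel technical challenge is the simultaneous bounding of the gadget degrees, the interpolation degree, and the number of sampling nodes, so that the entire construction fits inside $\EVAL^{(\Delta)}$ for a single fixed $\Delta$.
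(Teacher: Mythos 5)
Your high-level architecture is on target and matches the paper's strategy in its broad strokes: condense pairwise-proportional rows, accept that this creates degree-dependent vertex weights $W_c(\deg w)$, realize a ``virtual'' gadget via interpolation in the $\EVAL(\mathscr A,\mathscr D)$ framework, and finish by redistributing a power-form vertex weight onto the edges. However, there are several concrete gaps that would break the argument as written.

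First, the $O(1)$-size interpolation cannot work. In the paper's proof, the gadget $\mathcal{R}_{d,n,p}$ \emph{replaces} each degree-$d$ vertex by a $d$-cycle whose edges are copies of a path gadget $\mathcal P_{n,p}$, and the free parameter is the path length $n$. When the partition function $Z_{A',\mathfrak D}(G_{n,p})$ is stratified over assignments, the number of unknown coefficients is $\binom{t+s-1}{s-1}$ with $t=2|E(G)|$ — polynomial in $|G|$, not constant. Consequently the Vandermonde system needs polynomially many distinct values of $n$; querying only an $O(1)$ set of instances gives far too few equations. There is no way to localize the interpolation to constant size because the quantities being solved for couple all vertices through the edges of $G$.

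Second, the target $Y(G)$ you define — with a hand-chosen single representative $\mu_{c,k(c)}^{\deg w}$ — is not what the interpolation naturally produces. The virtual evaluation at ``$n=0$'' replaces $\mathcal P_{n,p}$ by the diagonal matrix $(D^{\llbracket 2p\rrbracket})^{-1}$, forcing each cycle to receive a uniform assignment, and the resulting per-vertex weight is $w_c^{\deg w}$ with $w_c = D^{\llbracket 2p+1\rrbracket}_c / D^{\llbracket 2p\rrbracket}_c$ — a ratio of sums, not a single $\mu_{c,j}$. This is fine: all that is needed for redistribution is a $\deg(w)$-th power of some positive scalar. Trying to isolate a particular $\mu_{c,k(c)}$ is both unnecessary and not achievable by this kind of interpolation, and it is the source of the ``requires care'' worry you flag — a worry that vanishes once you settle for the natural $w_c$.

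Third, the closing step is wrong. After condensation the rows of the reduced matrix are pairwise linearly independent, so a second condensation is trivial, and the iterate never becomes a $\{0,1\}$ matrix. The bounded-degree Dyer–Greenhill theorem therefore cannot serve as a base case. What actually closes the argument is that the reduction $\EVAL(C)\le^{\mathrm P}_{\mathrm T}\EVAL^{(\Delta)}_{\simp}(A,D)$ is from the \emph{unbounded}-degree problem $\EVAL(C)$ to the bounded-degree one, so one invokes the Bulatov–Grohe dichotomy (Theorem~\ref{thm:Bulatov-Grohe}) on $\EVAL(C)$ with arbitrary graphs. Since positive diagonal scaling preserves failure of block-rank-$1$, $\EVAL(C)$ is \#P-hard, and hardness transfers.

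Finally, and relatedly, you describe ``attaching'' a probing gadget to each vertex of $G$ rather than \emph{replacing} the vertex by a cycle. The cycle structure is essential: it is what, in the virtual $n=0$ limit, forces every $F_i$ around the cycle to take a common value so that the whole gadget can be contracted back to a single vertex of $G$. Simply appending a gadget leaves the original vertex and its original degree in place, and the degree-dependent factor $W_c(\deg w)$ remains unresolved. You also need a nondegeneracy guarantee (the paper's Lemma~\ref{lem:ADA-nondeg-thick}, giving a $p$ with $(A'D^{\llbracket 2\rrbracket}A')^{\odot p}$ nondegenerate) so that the eigenvalues $\lambda_\ell$ are nonzero and the Vandermonde system has full rank after merging identical columns.
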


The degree bound $\Delta$  proved
 in Theorem~\ref{thm:bd-hom-nonneg-weak} depends on $A$,
as is the case in Theorem~\ref{thm:Dyer-Greenhill}.
 The authors 
of~\cite{Dyer-Greenhill-2000} conjectured that a universal bound $\Delta =3$
works for Theorem~\ref{thm:Dyer-Greenhill}; whether
a universal bound exists for both
Theorems~\ref{thm:Dyer-Greenhill} and~\ref{thm:bd-hom-nonneg-weak} is open.
For general symmetric real or complex $A$, it is open
whether bounded degree versions of the dichotomies 
in~\cite{Goldberg-et-al-2010} and~\cite{Cai-Chen-Lu-2013} hold.
Xia~\cite{MingjiXia} proved that a  universal bound 
does not exist for complex symmetric matrices $A$, assuming \#P
does not collapse to P.

We prove a broader dichotomy than Theorem~\ref{thm:bd-hom-nonneg-weak},
which also includes arbitrary
nonnegative vertex weights.

\begin{theorem}\label{thm:bd-hom-nonneg}
Let $A$ and $D$ be $m \times m$ nonnegative matrices, where $A$ is symmetric, and $D$ is diagonal. 
Let $A'$ be the matrix obtained from $A$ by
 striking out rows and columns that correspond to
$0$ entries of $D$ on the diagonal.
If $A'$ is block-rank-$1$, then the problem $\EVAL(A, D)$ is in polynomial time.
Otherwise, for some $\Delta > 0$, the problem $\EVAL_{\simp}^{(\Delta)}(A, D)$ is \#P-hard.
\end{theorem}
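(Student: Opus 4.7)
The tractability half is short. If $D_{i,i} = 0$ then assigning color $i$ to any vertex of $G$ kills the product, so $Z_{A, D}(G) = \sum_{\xi \colon V \to I} \prod_{w} D_{\xi(w)} \prod_{(u, v) \in E} A_{\xi(u), \xi(v)}$, where $I = \{i : D_{i, i} > 0\}$. On this restricted domain all edge entries come from $A'$ and every $D_{\xi(w)}$ is strictly positive. Pulling the positive vertex weights out (using that every vertex is weighted by the same $D$-dependent factor for each choice of color) reduces $\EVAL(A, D)$ to $\EVAL(A')$ up to polynomial-time pre- and post-processing. Since $A'$ is block-rank-$1$, Theorem~\ref{thm:Bulatov-Grohe} supplies a polynomial-time algorithm.

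For the hardness half assume $A'$ is not block-rank-$1$. Dropping the inactive indices (they no longer contribute), assume $D > 0$ so the hypothesis becomes: $A$ itself is nonnegative symmetric and not block-rank-$1$. The target of the reduction is $\EVAL(C)$ for some nonnegative symmetric $C$ that still fails to be block-rank-$1$; hardness of $\EVAL(C)$ then comes for free from Theorem~\ref{thm:Bulatov-Grohe}. The first concrete step is a \emph{twin reduction} on $A$: partition $[m]$ into classes of pairwise linearly dependent rows of $A$, keep one representative per class, and let $\tilde A$ be the resulting smaller matrix. Because each row being merged is a scalar multiple $\mu_{\zeta(w) j}$ of its class representative, the vertex-factor at a vertex $w$ that gets colored by class $\zeta(w)$ becomes $\sum_{j = 1}^{m_{\zeta(w)}} \alpha_{\zeta(w) j} \mu_{\zeta(w) j}^{\deg(w)}$. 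In the 0-1 setting all $\mu_{\zeta(w) j} \in \{0, 1\}$, so this scalar is degree-\emph{in}dependent and plays the role of an ordinary vertex weight; that is the reason the Dyer-Greenhill argument goes through. Here, with arbitrary nonnegative entries, the quantity genuinely depends on $\deg(w)$ and cannot be absorbed into edge weights by any naive redistribution.

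To work around this, the plan is to design edge gadgets built from thickenings $T_p$ and stretchings $S_r$ of a carefully chosen small graph, keeping track of the resulting edge signatures as polynomial expressions in $p$ and $r$. Placing many copies of such a gadget into an auxiliary host $G'$, and viewing the whole configuration as a GH-grid $\Omega$, the quantity computable via oracle calls to $\EVAL_{\simp}^{(\Delta)}(A, D)$ becomes a polynomial $P$ in the gadget parameters, whose individual coefficients have combinatorial meaning not for $Z_{A, D}$ directly but for $Z_{\mathscr A, \mathscr D}(\Omega)$, exactly as Definition~\ref{def:EVAL(scrA,scrD)} is set up to allow. Varying the parameters and inverting a generalized Vandermonde system extracts those coefficients one-by-one. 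The crucial \emph{virtual} step is to aggregate these coefficients along a limiting sequence that isolates a single dominant $\mu$ in each $\sum_j \alpha_{\zeta(w) j} \mu_{\zeta(w) j}^{\deg(w)}$; the effective vertex factor then becomes $\mu^{\deg(w)}$, a pure power which is redistributable, i.e.\ can be written as $\prod_{e \ni w} \mu^{1/?}$ and folded into the edge weights by passing from $A$ to $C := \mu^{\text{suitable}} \cdot \tilde A$. Because the twin reduction and a scalar rescaling preserve the block structure, $C$ remains non-block-rank-$1$, so Theorem~\ref{thm:Bulatov-Grohe} finishes the reduction.

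The main obstacle is the virtual-interpolation/aggregation step: the coefficients of $P$ are not a priori values of $Z_{A, D}$ on any graph we can construct, so one must work entirely inside the $\EVAL(\mathscr A, \mathscr D)$ bookkeeping and justify that the limiting aggregation recovers $\mu^{\deg(w)}$ cleanly, with quantitative control on which $\mu$ survives (so that $C$ inherits the non-block-rank-$1$ property rather than degenerating). Secondary but nontrivial obstacles are enforcing that every gadget insertion leaves the host graph simple and of degree at most some fixed $\Delta$ (handled by inserting sufficiently long stretchings between parallel edges and bounding thickening multiplicities), and handling degenerate cases where some $\mu_{\zeta(w) j}$ coincide or vanish (handled by a canonical pre-processing of $A$ up to twin equivalence). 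Once these are in place, the reduction $\EVAL(C) \le_{\mathrm T}^{\mathrm P} \EVAL_{\simp}^{(\Delta)}(A, D)$ yields \#P-hardness of the latter.
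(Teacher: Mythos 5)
Your overall architecture matches the paper's: twin-reduce $A$ to a matrix with pairwise linearly independent columns, observe that the resulting vertex factor $\sum_j \alpha_{\zeta(w)j}\mu_{\zeta(w)j}^{\deg(w)}$ is degree-dependent, build gadgets from thickenings and stretchings, interpolate via a generalized Vandermonde system through the $\EVAL(\mathscr A, \mathscr D)$ bookkeeping on an auxiliary graph $G'$, and reduce to $\EVAL(C)$ for a non-block-rank-$1$ matrix $C$. Up to that level of resolution you have the right plan.

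However, the mechanism you propose for the ``virtual'' step is not the paper's mechanism, and as stated it would not work. You say the aggregation should ``isolate a single dominant $\mu$'' along a limiting sequence so that the effective vertex factor becomes $\mu^{\deg(w)}$ for that dominant eigenvalue. Isolating the dominant $\mu_{\zeta(w)j}$ would require sending a thickening parameter $p$ to infinity, which destroys the bounded-degree requirement you are trying to prove; and since the reduction must produce exact algebraic values, you cannot just pass to the limit inside a single oracle call. What the paper actually does is structurally different: it fixes $p$ once and for all (just large enough that $B=(A'D^{\llbracket 2\rrbracket}A')^{\odot p}$ is nondegenerate), builds a $d$-cycle gadget $\mathcal R_{d,n,p}$ with one dangling edge per cycle vertex, and interpolates in $n$ down to the nonphysical value $n=0$. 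At $n=0$ the constituent edge signature $L^{(0)}=(D^{\llbracket 2p\rrbracket})^{-1}$ is diagonal, which forces every vertex on the cycle to receive the same color; contracting the cycle then yields an effective vertex weight $w_j^{\deg(w)}$ with
\[
w_j \;=\; \frac{D^{\llbracket 2p+1\rrbracket}_j}{D^{\llbracket 2p\rrbracket}_j} \;=\; \frac{\sum_k \alpha_{jk}\mu_{jk}^{2p+1}}{\sum_k \alpha_{jk}\mu_{jk}^{2p}},
\]
a fixed positive ratio that is generically equal to none of the $\mu_{jk}$. The redistribution is then $w_{\zeta(w)}^{\deg(w)}=\prod_{e\ni w} w_{\zeta(w)}$, giving $C_{ij}=A'_{ij}w_iw_j$ --- a two-sided diagonal conjugation, not the scalar rescaling $C=\mu^{\text{suitable}}\cdot\tilde A$ you wrote. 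The reason $C$ stays non-block-rank-$1$ is that conjugating by a positive diagonal matrix preserves the zero pattern and the rank of each block; a scalar rescaling would be trivial but is not what occurs. So the missing idea is precisely: choose the gadget so that its edge signature extrapolates at $n=0$ to something \emph{diagonal}, thereby forcing color consistency on the cycle and producing a perfect $\deg(w)$-th power of a single scalar per color class, with the dangling edges supplying exactly the degree offset needed for the ratio $D^{\llbracket 2p+1\rrbracket}/D^{\llbracket 2p\rrbracket}$ to appear.
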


Every $0$ entry of $D$ on the diagonal effectively nullifies
the corresponding domain element in $[m]$, so the problem
becomes an equivalent problem on the reduced domain. Thus, for a nonnegative diagonal
$D$, without
loss of generality, we may assume the domain has already been reduced
so that $D$ is positive diagonal. In what follows,  we will make this
assumption.

In Appendix~\ref{sec:Tractability-part},  we will prove the 
tractability part of
Theorem~\ref{thm:bd-hom-nonneg}. This follows easily from known results.
In Appendix~\ref{sec:Two-technical-lemmas}, 
we will present two technical lemmas, 
Lemma~\ref{lem:ADA-pairwise-lin-ind} and Lemma~\ref{lem:ADA-nondeg-thick}
 to be used in
Section~\ref{sec:Hardness-proof}. Finally, in Appendix~\ref{sec:Goldberg-et-al-2010-dichotomy} we prove 
Theorem~\ref{thm:EVAL-simp-interp},
showing that the \#P-hardness part of  the dichotomy for counting GH
by Goldberg et al.~\cite{Goldberg-et-al-2010} for
real symmetric matrix (with mixed signs) is also valid for  simple graphs.

\section{Hardness proof}\label{sec:Hardness-proof}

We proceed to prove the \#P-hardness part of Theorem~\ref{thm:bd-hom-nonneg}.
Let $A$ and $D$ be $m \times m$ matrices,
where $A$ is nonnegative symmetric but not block-rank-$1$, and $D$ is positive diagonal.
%
The first step is to eliminate pairwise linearly dependent rows and columns
of $A$.
(We will see that this step will naturally create nontrivial vertex weights
even if we initially start with the vertex unweighted case $D=I_m$.)

%

If $A$ has a zero row or column $i$, then
for any connected input graph $G$ other than a single isolated vertex,
no map $\xi: V(G) \rightarrow [m]$ having
 a nonzero contribution to $Z_{A, D}(G)$
can map any vertex of $G$ to $i$.  
So, by crossing out all zero rows and columns (they have the same index
set since $A$ is symmetric) 
we may assume that $A$ has no zero rows or columns.
We then delete the same set of rows and columns from $D$,
thereby expressing the problem
$\EVAL_{\simp}^{(\Delta)}(A, D)$ for $\Delta \ge 0$ on a smaller domain.
Also permuting the rows and columns of both $A$ and $D$ simultaneously by the same permutation
does not change the value of $Z_{A, D}(\cdot)$,
and so it does not change the complexity of
$\EVAL_{\simp}^{(\Delta)}(A, D)$ for $\Delta \ge 0$ either.
Having no zero rows and columns implies that pairwise linear dependence
is an equivalence relation, and so
we may assume that the pairwise linearly dependent rows and
columns of $A$ are contiguously arranged.
Then, after renaming the indices, the entries of $A$ are of the following form:
$A_{(i, j), (i', j')} = \mu_{i j} \mu_{i' j'} A'_{i, i'}$,
where $A'$ is a nonnegative symmetric $s \times s$ matrix
with all columns nonzero and pairwise linearly independent,
 $1 \le i, i' \le s$, $1 \le j \le m_i$, $1 \le j' \le m_{i'}$,
 $\sum_{i = 1}^s m_i = m$,
and all $\mu_{i j} > 0$.
We also rename the indices of the matrix $D$ so that
the diagonal entries of $D$
are of the following form:
$D_{(i, j), (i, j)} = \alpha_{i j} > 0$
for $1 \le i \le s$ and $1 \le j \le m_i$.
As $m \ge 1$ we get $s \ge 1$.

Then the partition function $Z_{A, D}(\cdot)$ can be written in 
a compressed form
\[
Z_{A, D}(G)  =
\sum_{\zeta: V(G) \rightarrow [s]} \left( \prod_{w \in V(G)} \sum_{j = 1}^{m_{\zeta(w)}} \alpha_{\zeta(w) j} \mu_{\zeta(w) j}^{\deg(w)} \right) \prod_{(u, v) \in E(G)} A'_{\zeta(u), \zeta(v)} = Z_{A', \mathfrak D}(G)
\]
where $\mathfrak D = \{ D^{\llbracket k \rrbracket}\}_{k = 0}^\infty$
with $D^{\llbracket k \rrbracket}_i = \sum_{j = 1}^{m_i} \alpha_{i j} \mu_{i j}^k > 0$ for $k \ge 0$ and $1 \le i \le s$.
Then all matrices in $\mathfrak D$ are positive diagonal.
Note the dependence on the vertex degree $\deg(w)$ for $w \in V(G)$.
Since the underlying graph $G$ remains unchanged,
this way we obtain the equivalence
$\EVAL_{\simp}^{(\Delta)}(A, D) \equiv_{\mathrm T}^{\mathrm P} \EVAL_{\simp}^{(\Delta)}(A', \mathfrak D)$ for any $\Delta \ge 0$.
Here the subscript $\simp$ can be included or excluded,
and the same is true for the superscript $(\Delta)$,
the statement remains true in all cases.
We also point out that
 the entries of the matrices $D^{\llbracket k \rrbracket} \in \mathfrak D$
are computable in polynomial time
in the input size of $(A, D)$ as well as in $k$.

\subsection{Gadgets $\mathcal P_{n, p}$ and $\mathcal R_{d, n, p}$}\label{subsec:Gadget-Rdnp}

We first introduce the  \emph{edge gadget}
 $\mathcal P_{n, p}$, for all $p, n \ge 1$.
It is  obtained by replacing each edge
of a  path of length $n$ by  the gadget in
Figure~\ref{fig:ADA-to-p-gadget-advanced}
from Lemma~\ref{lem:ADA-nondeg-thick}.
More succinctly  $\mathcal P_{n, p}$ is  $S_2 T_p S_n e$, where $e$ is an edge.

To define the gadget $\mathcal R_{d, n, p}$, for all $d, p, n \ge 1$, we start
with a cycle on $d$ vertices $F_1, \ldots, F_d$ (call it a $d$-cycle),
replace every edge of the  $d$-cycle by a copy of  $\mathcal P_{n, p}$, 
and append a dangling
edge at each vertex $F_i$ of the $d$-cycle.
To be specific, a $2$-cycle has
 two vertices with $2$ parallel edges between them,
and a $1$-cycle
is a loop on one vertex. The gadget $\mathcal R_{d, n, p}$
always has $d$ dangling edges.
Note that all
 $\mathcal R_{d, n, p}$ are loopless simple graphs (i.e., without
parallel edges or loops), for $d, n, p \ge 1$.
%
An example of a gadget $\mathcal R_{d, n, p}$
is shown in Figure~\ref{fig:d-gon-gagdet-simplified}.
For the special cases $d = 1, 2$,
examples of gadgets $\mathcal R_{d, n, p}$
can be seen in Figure~\ref{fig:d=1,2-gadgets}.

\begin{figure}[t]
\centering
\includegraphics{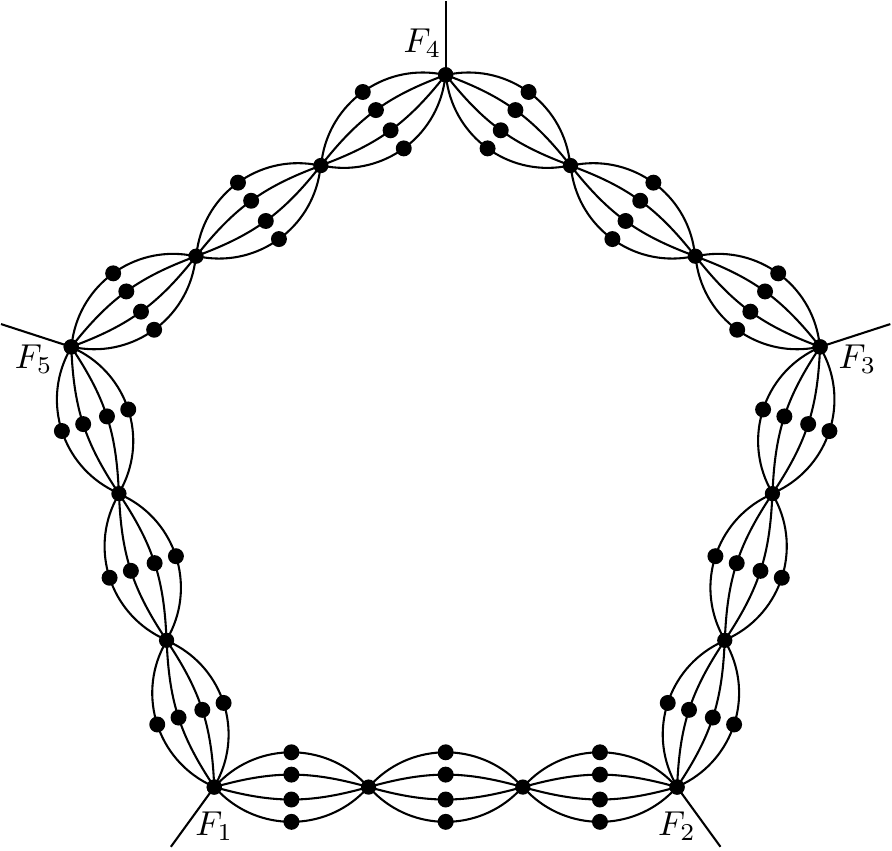}
\caption{\label{fig:d-gon-gagdet-simplified}The gadget $\mathcal R_{5, 3, 4}$.}
\end{figure}

\begin{figure}[t]
\setbox1=\hbox{\includegraphics[scale=0.73]{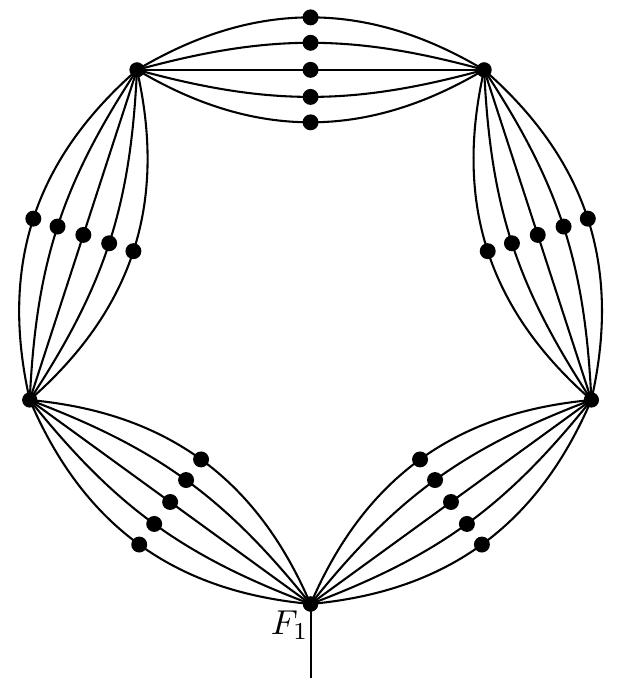}}
\setbox2=\hbox{\includegraphics[scale=0.73]{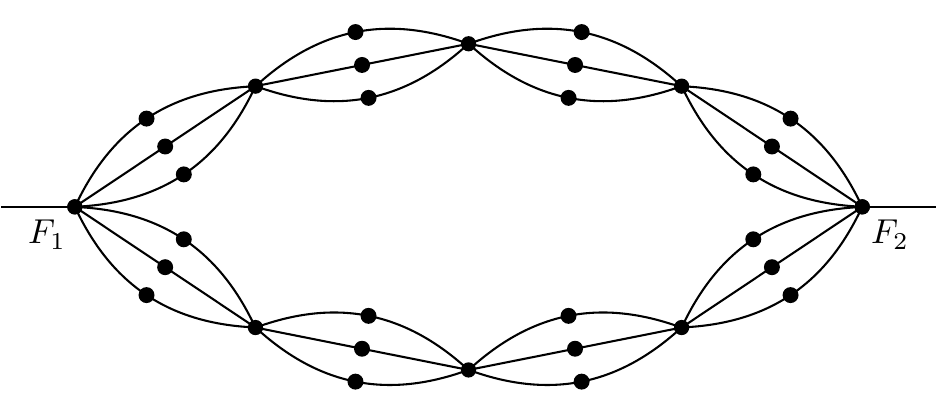}} 
\begin{subfigure}[t]{0.5\textwidth} 
\centering
\includegraphics[scale=0.73]{figure6.pdf} 
\subcaption{$\mathcal R_{1, 5, 5}$\label{subfig-1:d=1}}
\end{subfigure}\hfill
\begin{subfigure}[t]{0.5\textwidth} 
\centering
\raisebox{0.5\ht1}{\raisebox{-0.5\ht2}{
\includegraphics[scale=0.73]{figure7.pdf} 
}} 
\subcaption{$\mathcal R_{2, 4, 3}$\label{subfig-2:d=2}}
\end{subfigure}
\caption{\label{fig:d=1,2-gadgets}Examples of gadgets $\mathcal R_{d, n, p}$ for $d = 1, 2$}
\end{figure}

We note that
vertices in $\mathcal P_{n, p}$ have degrees at most $2p$,
and vertices in $\mathcal R_{d, n, p}$  have degrees at most $2p+1$,
taking into account the dangling edges. Clearly
$|V({\mathcal R_{d, n, p}})| = d n (p+1)$ and 
$|E({\mathcal R_{d, n, p}})| = (2np+1) d$, including the  dangling edges.

By Lemma~\ref{lem:ADA-nondeg-thick},
we can fix some $p \ge 1$ such that $B = (A' D^{\llbracket 2 \rrbracket} A')^{\odot p}$ is nondegenerate, where the superscript $\llbracket 2 \rrbracket$
is from the stretching operator $S_2$ which creates those degree
$2$ vertices, and the  superscript $\odot p$ is
 from the thickening operator $T_p$, followed by $S_2$, which creates those
parallel paths of length $2$.
%
The edge gadget $\mathcal P_{n, p}$ has the edge weight matrix
\begin{align}
L^{(n)} &=
\underbrace{B D^{\llbracket 2 p \rrbracket} B \ldots B D^{\llbracket 2 p \rrbracket} B}_{D^{\llbracket 2 p \rrbracket} \text{ appears } n - 1 ~\ge~ 0 \text{ times}} = B (D^{\llbracket 2 p \rrbracket} B)^{n - 1} \label{Pnp-edgeweightmatrix1} \\
&= (D^{\llbracket 2 p \rrbracket})^{-1 / 2} ((D^{\llbracket 2 p \rrbracket})^{1 / 2} B (D^{\llbracket 2 p \rrbracket})^{1 / 2})^n (D^{\llbracket 2 p \rrbracket})^{-1 / 2},
\label{Pnp-edgeweightmatrix2}
\end{align}
where in the notation $L^{(n)}$
we suppress the index $p$.
The  $n-1$ occurrences of
 $D^{\llbracket 2 p \rrbracket}$ in (\ref{Pnp-edgeweightmatrix1}) are due to
those $n-1$ vertices of degree $2p$.
Here $(D^{\llbracket 2 p \rrbracket})^{1 / 2}$ is a diagonal matrix
with the positive square roots
of the corresponding entries
of $D^{\llbracket 2 p \rrbracket}$ on the main diagonal,
and $(D^{\llbracket 2 p \rrbracket})^{-1 / 2}$ is its inverse.
The vertices $F_i$ are of degree $2 p + 1$ each,
but the contributions by its vertex weights are
not included in $L^{(n)}$.

The constraint function induced by  $\mathcal R_{d, n, p}$ 
is more complicated to write down. When it is placed as a part of
a graph, for any given assignment
to the $d$ vertices $F_i$, we can express the contribution of
the gadget  $\mathcal R_{d, n, p}$ in terms of $d$ copies of
$L^{(n)}$, \emph{together with} the vertex weights incurred at the
 $d$ vertices $F_i$ which will depend on their degrees.

\subsection{Interpolation using $\mathcal R_{d, n, p}$}\label{subsec:Interpolation}

Assume for now that $G$ does not contain isolated vertices.
We will  replace every vertex $u \in V(G)$ of degree  $d = d_u = \deg(u) \ge 1$ 
by a copy of  $\mathcal R_{d, n, p}$, for all $n, p \ge 1$.
The replacement operation 
can
be described in two steps: In step one, each $u \in V(G)$  is replaced
by a $d$-cycle on vertices $F_1, \ldots, F_d$,
each having a dangling edge attached.
The  $d$ dangling edges will be identified one-to-one with
the $d$ incident edges at $u$.
If $u$ and $v$ are adjacent vertices in $G$, then
the edge $(u, v)$ in $G$ will be replaced by merging a
pair of dangling edges, one from the $d_u$-cycle
and one from the $d_v$-cycle.
Thus in step one we obtain a graph $G'$, which basically replaces every
vertex $u \in V(G)$  by a cycle of $\deg(u)$ vertices.
Then in step two, for every cycle in $G'$  that
corresponds to some $u \in V(G)$ we replace each edge
on the cycle by a copy of the edge gadget
 $\mathcal P_{n, p}$.

Let $G_{n, p}$ denote the graph obtained from $G$ by the 
replacement procedure above.
Since all gadgets $\mathcal R_{d, n, p}$ are loopless simple graphs,
so are $G_{n, p}$ for all $n, p \ge 1$,
even if $G$ has multiple edges
(or had multiloops, 
if we view a loop as adding degree $2$ to the incident vertex).
As a technical remark, if $G$ contains vertices of degree $1$, 
then the intermediate graph $G'$ has loops but all graphs $G_{n, p}$
($n, p \ge 1$) do not.
Also note that all vertices in $G_{n, p}$ have degree
at most
$2 p + 1$,
which is independent of $n$.

Next, it is not hard to see that
\begin{gather*}
|V(G_{n, p})| = \sum_{u \in V(G)} d_u n (p + 1)  
= 2 n (p + 1) |E(G)|, \\
|E(G_{n, p})| = |E(G)| + \sum_{u \in V(G)} 2 n p d_u  
= (4 n p + 1) |E(G)|.
\end{gather*}
Hence  the size of 
the graphs $G_{n, p}$ is polynomially bounded in the size of $G$, $n$ and $p$.

Since we chose a fixed $p$,
and will choose $n$ to be bounded by a polynomial in the size of $G$,
whenever something is computable in polynomial time in $n$,
it is also computable in polynomial time in the size of $G$
(we will simply say in polynomial time).

We consider $Z_{A', \mathfrak D}(G)$, and substitute
$G$ by $G_{n, p}$.
We will make use of the edge weight matrix $L^{(n)}$ of  $\mathcal P_{n, p}$
in (\ref{Pnp-edgeweightmatrix2}).
%
The vertices $F_i$ are of degree $2 p + 1$  each in $G_{n, p}$,
so will each contribute
a vertex weight according to the diagonal 
matrix $D^{\llbracket 2 p + 1 \rrbracket}$
to the partition function, which are not included in $L^{(n)}$,
but now must be accounted for in $Z_{A', \mathfrak D}(G_{n, p})$.

Since $B$ is real symmetric and $D^{\llbracket 2 p \rrbracket}$ is positive diagonal,
the matrix 
\[\widetilde B = (D^{\llbracket 2 p \rrbracket})^{1 / 2} B (D^{\llbracket 2 p \rrbracket})^{1 / 2}\]
 is real symmetric.
Then $\widetilde B$ is orthogonally diagonalizable over $\mathbb{R}$, i.e.,
there exist a real orthogonal matrix $S$
and a real diagonal matrix $J = \diag(\lambda_i)_{i = 1}^s$ such that $\widetilde B = S^T J S$.
Then ${\widetilde B}^n = S^T J^n S$ so the edge weight matrix for 
 $\mathcal P_{n, p}$ becomes
\[
L^{(n)} 
= (D^{\llbracket 2 p \rrbracket})^{-1 / 2} {\widetilde B}^n (D^{\llbracket 2 p \rrbracket})^{-1 / 2}
= (D^{\llbracket 2 p \rrbracket})^{-1 / 2} S^T J^n S (D^{\llbracket 2 p \rrbracket})^{-1 / 2}.
\]
Note that $L^{(n)}$ as a matrix is defined for any $n \ge 0$,
and $L^{(0)} = (D^{\llbracket 2 p \rrbracket})^{-1}$,
even though there is no physical gadget $\mathcal P_{0, p}$ that corresponds to it.
However, it is precisely this ``virtual'' gadget we wish to
``realize'' by interpolation.

Clearly, $\widetilde B$ is nondegenerate as $B$ and $(D^{\llbracket 2 p \rrbracket})^{1/2}$ both are, and so is $J$.
Then all $\lambda_i \ne 0$.
We can also write $L_{i j}^{(n)} = \sum_{\ell = 1}^s a_{i j \ell} \lambda_\ell^n$
for every $n \ge 0$ and some real $a_{i j \ell}$'s
which depend on $S$, $D^{\llbracket 2 p \rrbracket}$, but not on
$J$ and $n$, for all $1 \le i, j, \ell \le s$.
By the formal expansion of the symmetric matrix $L^{(n)}$ above,
we have $a_{i j \ell} = a_{j i \ell}$.
Note that for all $n, p \ge 1$,
the gadget $\mathcal R_{d_v, n, p}$ for $v \in V(G)$ employs  exactly
$d_v$ copies of $\mathcal P_{n, p}$. Let
$t = \sum_{v \in V(G)} d_v = 2 |E|$; this  is precisely the number of
edge gadgets $\calP_{n, p}$ in $G_{n, p}$.

In the evaluation of the partition function $Z_{A', \mathfrak D}(G_{n, p})$,
we stratify the vertex assignments in $G_{n, p}$ as follows.
Denote by $\kappa = (k_{i j})_{1 \le i \le j \le s}$
a tuple of nonnegative integers, where the indexing is over all
$s(s+1)/2$ ordered pairs $(i, j)$. There are
a total of $\binom{t + s (s + 1) / 2 - 1}{s (s + 1) / 2 - 1}$
such tuples that satisfy $\sum_{1 \le i \le j \le s}  k_{i j} = t$.
For a fixed $s$, this is a polynomial in $t$, and thus
a polynomial in the size of $G$. 
Denote by $\mathcal K$ the set of all such 
tuples $\kappa$.
We will stratify all vertex assignments in $G_{n, p}$ 
by $\kappa\in  \mathcal K$, namely all assignments
such that there are exactly $k_{i j}$ many constituent edge gadgets $\mathcal P_{n, p}$ with the two end points (in either order of the end points)
 assigned $i$ and $j$ respectively.


For each $\kappa \in \mathcal K$, the edge gadgets $\mathcal P_{n, p}$
in total contribute $\prod_{1 \le i \le j \le s} (L_{i j}^{(n)})^{k_{i j}}$
to the partition function $Z_{A', \mathfrak D}(G_{n, p})$.
If we factor this product out for each $\kappa \in \mathcal K$, we can express
$Z_{A', \mathfrak D}(G_{n, p})$ as a linear
combination of these products over all $\kappa \in \mathcal K$,
with  polynomially many coefficient values  $c_\kappa$
that are independent of all edge gadgets $\mathcal P_{n, p}$.
Another way to define these coefficients $c_\kappa$ is to think in terms of
 $G'$: For any $\kappa = (k_{i j})_{1 \le i \le j \le s}
 \in \mathcal K$,
we say a vertex assignment on $G'$ is consistent with $\kappa$ 
if it assigns exactly  $k_{i j}$ many cycle edges  of  $G'$
(i.e., those that belong to the cycles that replaced vertices in $G$) 
as ordered
pairs of vertices to the values $(i, j)$ or $(j,i)$. 
(For any loop in $G'$, as a cycle of length $1$ that came from 
a degree $1$ vertex of $G$, it can only be assigned $(i,i)$ for some
 $1 \le i \le s$.)
Let  $L'$ be any symmetric edge signature to be
assigned on each of these
cycle edges in $G'$,
and keep  the edge signature $A'$ on the
merged dangling edges between any two such cycles, 
and the suitable vertex
weights specified by $\mathfrak D$, namely each vertex receives its vertex weight according to  $D^{\llbracket 2 p +1 \rrbracket}$.
Then $c_\kappa$
is the sum, over all assignments
consistent with $\kappa$, of the products of  all 
edge weights and vertex weights \emph{other than}
the contributions by $L'$, in the evaluation of
the partition function on $G'$.
In other words, for each $\kappa \in \mathcal K$,
\[
c_\kappa = \sum_{\substack{\zeta \colon V(G') \to [s] \\ \zeta \text{ is consistent with } \kappa}} \prod_{w \in V(G')} D_{\zeta(w)}^{\llbracket 2 p + 1 \rrbracket} \prod_{(u, v) \in \widetilde E} A'_{\zeta(u), \zeta(v)},
\]
where $\widetilde E \subseteq E(G')$ are the non-cycle edges of $G'$ that are in $1$-$1$ correspondence with $E(G)$.

In particular, the values $c_\kappa$ are independent of $n$.
Thus for some polynomially many values $c_\kappa$, where $\kappa
\in \mathcal K$, we have
\begin{equation}\label{stratification-isolating-L}
Z_{A', \mathfrak D}(G_{n, p}) = \sum_{\kappa \in \mathcal K} c_\kappa \prod_{1 \le i \le j \le s} (L_{i j}^{(n)})^{k_{i j}}
= \sum_{\kappa \in \mathcal K} c_\kappa \prod_{1 \le i \le j \le s} (\sum_{\ell = 1}^s a_{i j \ell} \lambda_\ell^n)^{k_{i j}}. 
\end{equation}
Expanding out the last sum and rearranging the terms, for some
values $b_{i_1, \ldots, i_s}$  independent of $n$, we get
\[
Z_{A', \mathfrak D}(G_{n, p})
= \sum_{\substack{i_1 + \ldots + i_s = t \\ i_1, \ldots, i_s \ge 0}} b_{i_1, \ldots, i_s} ( \prod_{j = 1}^s \lambda_j^{i_j} )^n
\]
for all $n \ge 1$.

This represents a linear system with the unknowns $b_{i_1, \ldots, i_s}$ 
with the rows indexed by $n$. 
The number of unknowns is clearly $\binom{t + s - 1}{s - 1}$
which is polynomial in the size of the input graph $G$ since $s$ is a constant.
The values $\prod_{j = 1}^s \lambda_j^{i_j}$
can be clearly computed in polynomial time.

We show how to compute the value 
\[\sum_{\substack{i_1 + \ldots + i_s = t \\ i_1, \ldots, i_s \ge 0}} b_{i_1, \ldots, i_s}\]
from the values $Z_{A', \mathfrak D}(G_{n, p}),\, n \ge 1$ in polynomial time.
The coefficient matrix of this system is a Vandermonde matrix.
However, it can have repeating columns so it might not be of full rank
because the coefficients $\prod_{j = 1}^s \lambda_j^{i_j}$
do not have to be pairwise distinct.
However, when they are equal, say,
$\prod_{j = 1}^s \lambda_j^{i_j} =
\prod_{j = 1}^s \lambda_j^{i'_j}$, we replace
the corresponding unknowns $b_{i_1, \ldots, i_s}$ and $b_{i'_1, \ldots, i'_s}$
with their sum as a new variable.
Since all $\lambda_i \ne 0$,
we have a Vandermonde system of full rank after all such combinations.
Therefore we can solve this linear system
in polynomial time and find the desired value $\displaystyle \sum_{\scriptsize \substack{i_1 + \ldots + i_s = t \\ i_1, \ldots, i_s \ge 0}} b_{i_1, \ldots, i_s}$.

Now we will consider a problem in the framework of $Z_{\mathscr A, \mathscr D}$
according to Definition~\ref{def:EVAL(scrA,scrD)}.
Let $G_{0, p}$ be the (undirected) GH-grid,
with the underlying graph $G'$,
and every edge of the cycle in $G'$ corresponding 
to a vertex in $V(G)$ is assigned
the edge weight matrix $(D^{\llbracket 2 p \rrbracket})^{-1}$,
and we keep the vertex-weight matrices $D^{\llbracket 2 p + 1 \rrbracket}$ at all vertices $F_i$.
The other edges, i.e., the original edges of $G$, each keep the assignment
of the edge weight matrix $A'$.
(So in the specification of  $Z_{\mathscr A, \mathscr D}$,
we have $\mathscr A = \{(D^{\llbracket 2 p \rrbracket})^{-1}, A'\}$,
and $\mathscr D = \{D^{\llbracket 2 p + 1 \rrbracket}\}$.
We note that $G'$ may have loops, and
Definition~\ref{def:EVAL(scrA,scrD)} specifically allows this.)
Then
\[
Z_{\{ (D^{\llbracket 2 p \rrbracket})^{-1}, A' \}, D^{\llbracket 2 p + 1 \rrbracket}}(G_{0, p})
= \sum_{\substack{i_1 + \ldots + i_s = t \\ i_1, \ldots, i_s \ge 0}} b_{i_1, \ldots, i_s} ( \prod_{j = 1}^s \lambda_j^{i_j} )^0
= \sum_{\substack{i_1 + \ldots + i_s = t \\ i_1, \ldots, i_s \ge 0}} b_{i_1, \ldots, i_s}
\]
and we have just computed this value in polynomial time
in the size of $G$ from the values $Z_{A', \mathfrak D}(G_{n, p})$, for $n \ge 1$.
In other words, we have achieved it by querying
the oracle $\EVAL(A', \mathfrak D)$ on the instances $G_{n, p}$, for $n \ge 1$,
in polynomial time.


Equivalently, we have shown that we can simulate
a virtual ``gadget'' $\mathcal R_{d, 0, p}$
replacing every occurrence of $\mathcal R_{d, n, p}$
in  $G_{n, p}$ in polynomial time.
The virtual gadget $\mathcal R_{d, 0, p}$ has
the edge signature $(D^{\llbracket 2 p \rrbracket})^{-1}$ in place of
$(D^{\llbracket 2 p \rrbracket})^{-1 / 2} {\widetilde B}^n
(D^{\llbracket 2 p \rrbracket})^{-1 / 2}$ in each $\mathcal P_{n, p}$, since
\[
(D^{\llbracket 2 p \rrbracket})^{-1 / 2} {\widetilde B}^0
(D^{\llbracket 2 p \rrbracket})^{-1 / 2}
= (D^{\llbracket 2 p \rrbracket})^{-1 / 2} I_s (D^{\llbracket 2 p \rrbracket})^{-1 / 2} = (D^{\llbracket 2 p \rrbracket})^{-1}.
\]
Additionally, each $F_i$ retains the vertex-weight contribution with the matrix $D^{\llbracket 2 p + 1 \rrbracket}$ in $\mathcal R_{d, 0, p}$.
We view it as having ``virtual'' degree $2 p + 1$.
This precisely results in the GH-grid $G_{0, p}$.

However, even though  $G_{0, p}$
still retains the cycles,
since $(D^{\llbracket 2 p \rrbracket})^{-1}$ is a diagonal matrix, each
vertex $F_i$
in a cycle is forced to receive the same vertex 
assignment value in the domain set $[s]$;
all other vertex assignments contribute
 zero in the evaluation of $Z_{\{ (D^{\llbracket 2 p \rrbracket})^{-1}, A' \}, D^{\llbracket 2 p + 1 \rrbracket}}(G_{0, p})$.
This can be easily seen by traversing the vertices  $F_1, \ldots, F_d$
in a cycle.
Hence we can view each cycle employing the
virtual gadget $\mathcal R_{d, 0, p}$ as a single vertex
that contributes only a diagonal matrix of positive vertex weights
$P^{\llbracket d \rrbracket} = (D^{\llbracket 2 p + 1 \rrbracket} (D^{\llbracket 2 p \rrbracket})^{-1})^d$, where $d$ is  the vertex degree in $G$.
Contracting all  the cycles
to a single vertex each,
we arrive at the original graph $G$.
Let $\mathfrak P = \{ P^{\llbracket i \rrbracket} \}_{i = 0}^\infty$,
where we let $P^{\llbracket 0 \rrbracket} = I_s$, and for $i>0$,
we have  $P^{\llbracket i \rrbracket}_j = w_j^i$
where $w_j = \sum_{k = 1}^{m_j} \alpha_{j k} \mu_{j k}^{2 p + 1} / \sum_{k = 1}^{m_j} \alpha_{j k} \mu_{j k}^{2 p} > 0$
for $1 \le j \le s$.
This shows that we now can interpolate
the value $Z_{A', \mathfrak P}(G)$
using the values $Z_{A', \mathfrak D}(G_{n, p})$
in polynomial time in the size of $G$.
The graph $G$ is arbitrary but without isolated vertices here.
We show next how to deal with the case when $G$ has isolated vertices.

Given an arbitrary graph $G$, assume it has $h \ge 0$ isolated vertices.
Let $G^*$ denote the graph obtained from $G$ by their removal.
Then $G^*$ is of size not larger than $G$ and $h \le |V(G)|$.
Obviously, $Z_{A', \mathfrak P}(G) = (\sum_{i = 1}^m P_i^{\llbracket 0 \rrbracket})^h Z_{A', \mathfrak P}(G^*) = s^h Z_{A', \mathfrak P}(G^*)$.
Here the integer $s$ is a constant, so the factor $s^h > 0$ can be easily computed in polynomial time.
Thus, knowing the value $Z_{A', \mathfrak P}(G^*)$
we can compute the value $Z_{A', \mathfrak P}(G)$ in polynomial time.
Further, since we only use the graphs $G_{n, p}, n \ge 1$ during the interpolation,
each being simple of degree at most $2 p + 1$,
combining it with the possible isolated vertex removal step,
we conclude $\EVAL(A', \mathfrak P) \le_{\mathrm T}^{\mathrm P} \EVAL_{\simp}^{(2 p + 1)}(A', \mathfrak D)$.

Next, it is easy to see that for an arbitrary graph $G$
\begingroup
\allowdisplaybreaks
\begin{align*}
Z_{A', \mathfrak P}(G) &= \sum_{\zeta: V(G) \rightarrow [s]} \prod_{z \in V(G)} P^{\llbracket \deg(z) \rrbracket}_{\zeta(z)} \prod_{(u, v) \in E(G)} A'_{\zeta(u), \zeta(v)} \\
&= \sum_{\zeta: V(G) \rightarrow [s]} \prod_{z \in V(G)} w_{\zeta(z)}^{\deg(z)} \prod_{(u, v) \in E(G)} A'_{\zeta(u), \zeta(v)} \\
&= \sum_{\zeta: V(G) \rightarrow [s]} \prod_{(u, v) \in E(G)} w_{\zeta(u)} w_{\zeta(v)} A'_{\zeta(u), \zeta(v)} \\
&= \sum_{\zeta: V(G) \rightarrow [s]} \prod_{(u, v) \in E(G)} C_{\zeta(u), \zeta(v)} = Z_C(G).
\end{align*}
\endgroup
Here $C$ is an $s \times s$ matrix with
the entries $C_{i j} = A'_{i j} w_i w_j$ where $1 \le i, j \le s$.
Clearly, $C$ is a nonnegative symmetric matrix.
In the above chain of equalities,
we were able to \emph{redistribute the weights $w_i$
and $w_j$ into the edge weights} $A'_{i j}$
which resulted in the edge weights $C_{i j}$,
so that precisely each edge $\{u,v\}$ in $G$ gets two factors
$w_{\zeta(u)}$ and $w_{\zeta(v)}$ since the vertex weights at  $u$ and $v$
were $w_{\zeta(u)}^{\deg(u)}$ and $w_{\zeta(v)}^{\deg(v)}$ respectively.
(This is a crucial step in our proof.)
Because the underlying graph $G$ is arbitrary, it follows that
$\EVAL(A', \mathfrak P) 
\equiv_{\mathrm T}^{\mathrm P} \EVAL(C)$.
Combining this with the previous $\EVAL$-reductions and equivalences,
we obtain
\[
\EVAL(C) \equiv_{\mathrm T}^{\mathrm P} \EVAL(A', \mathfrak P) \le_{\mathrm T}^{\mathrm P} \EVAL_{\simp}^{(2 p + 1)}(A', \mathfrak D) \equiv_{\mathrm T}^{\mathrm P} \EVAL_{\simp}^{(2 p + 1)}(A, D),
\]
so that $\EVAL(C) \le_{\mathrm T}^{\mathrm P} \EVAL_{\simp}^{(\Delta)}(A, D)$,
by taking $\Delta = 2 p + 1$.

Remembering that our goal is to prove
the \#P-hardness for the matrices $A, D$
not satisfying the tractability conditions of Theorem~\ref{thm:bd-hom-nonneg},
we finally use the assumption that $A$ is not block-rank-$1$.
Next, noticing that all $\mu_{i j} > 0$,
by construction $A'$ is not block-rank-$1$ either.
Finally, because all $w_i > 0$ nor is $C$ block-rank-$1$
implying that $\EVAL(C)$ is \#P-hard by Theorem~\ref{thm:Bulatov-Grohe}.
Hence $\EVAL_{\simp}^{(2 p + 1)}(A, D)$ is also \#P-hard.
This completes the proof of the 
\#P-hardness part of Theorem~\ref{thm:bd-hom-nonneg}.

We remark that
one important step in our interpolation proof happened 
at the stratification step 
before (\ref{stratification-isolating-L}).
In the proof we have the goal of redistributing
vertex weights to edge weights; but this redistribution is 
sensitive to the degree of the vertices.
This led us to define
the auxiliary graph $G'$ and the coefficients $c_\kappa$.
Usually in an interpolation proof there are some coefficients 
that have a clear combinatorial meaning in terms of the
original problem instance. 
Here these values $c_\kappa$ do not have a clear combinatorial meaning in terms of 
$Z_{A', \mathfrak D}(G)$, rather they are defined in terms of
an intermediate problem instance $G'$, which is neither $G$ nor the
actual constructed graphs $G_{n, p}$.
It is only in a ``limiting'' sense that a certain combination
of these values  $c_\kappa$ 
 allows us to compute $Z_{A', \mathfrak D}(G)$.

\appendix
\section{Tractability part}\label{sec:Tractability-part}

The tractability part of Theorem~\ref{thm:bd-hom-nonneg} follows easily from
known results. For completeness we outline a proof here.
Let $A$ and $D$ be $m \times m$ matrices,
where $A$ is nonnegative symmetric block-rank-$1$ and $D$ is positive diagonal.

First, $Z_{A, D}(G)$ can be reduced to the
connected components  $G_1, \ldots, G_t$  of $G$, 
\[
Z_{A, D}(G) = \prod_{i = 1}^t Z_{A, D}(G_i),
\]
so we may as well assume  $G$ is connected.
We permute the rows and columns of $A, D$ by the same permutation
and then cross out zero rows and columns of $A$. This 
does not change  $Z_{A, D}$. We may
assume that $A = \diag(A_i)_{i = 1}^k$ is block diagonal
with nonzero blocks $A_1, \ldots, A_k$, where each block $A_i$
is either a symmetric matrix of rank $1$ with no zero entries,
or a symmetric bipartite matrix of the form
$\left( \begin{smallmatrix} 0 & B \\ B^T & 0 \end{smallmatrix} \right)$
where $B$ has rank $1$ and no zero entries.
Then we can write $D = \diag(D_i)_{i = 1}^k$ 
where each $D_i$ is positive diagonal of the corresponding size.
As $A$ is block diagonal and $G$ is connected,
\[
Z_{A, D}(G) = \sum_{i = 1}^k Z_{A_i, D_i}(G).
\]
So we may as well assume that $A$ is one of these blocks.
Also let $D = \diag(\alpha_i)_{i = 1}^m$.


%
\begin{enumerate}
\item[1)] $A$ is a symmetric matrix of rank $1$ with no zero entries.
We can write $A = x^T x$ for some positive row vector $x = (x_i)_{i = 1}^m$.
Then
\begin{align*}
Z_{A, D}(G) = \prod_{u \in V(G)} \sum_{i = 1}^m \alpha_i x_i^{\deg(u)}.
\end{align*}

\item[2)]  $A = \left( \begin{smallmatrix} 0 & B \\ B^T & 0 \end{smallmatrix} \right)$,
where $B$ is $\ell \times (m - \ell)$
(for some $1 \le \ell < m$) has rank $1$ and no zero entries.
We can write $B = x^T y$ for some positive
row vectors $x = (x_i)_{i = 1}^\ell$ and $y = (y_j)_{j = \ell+1}^{m}$.
Since $G$ is connected, $Z_{A, D}(G) = 0$ unless $G$ is bipartite.
If $G$ is bipartite with a vertex bipartization $V_1 \cup V_2$,
then we only need to consider  maps $\xi \colon G \to [m]$ such that
either $\xi(V_1) \subseteq [\ell],\, \xi(V_2) \subseteq [\ell + 1, m]$
or $\xi(V_1) \subseteq [\ell + 1, m],\, \xi(V_2) \subseteq [\ell]$,
with all other maps contribute zero to $Z_{A, D}(G)$.
Then
\begin{align*}
Z_{A, D}(G) &= 
\left(
\prod_{u \in V_1} \sum_{i = 1}^\ell \alpha_i x_i^{\deg(u)}
\right)
\left(
 \prod_{v \in V_2} \sum_{j = \ell+1}^{m} \alpha_{j} y_j^{\deg(v)}
\right)
\\
&+ \left(
\prod_{u \in V_1} \sum_{j = \ell+1}^{m} \alpha_{j} y_j^{\deg(u)}
\right) 
\left(
 \prod_{v \in V_2} \sum_{i = 1}^\ell \alpha_i x_i^{\deg(v)}
\right).
\end{align*}
\end{enumerate}

\section{Two technical lemmas}\label{sec:Two-technical-lemmas}



We need two technical lemmas. The following lemma is from \cite{Dyer-Greenhill-2000}
(Lemma 3.6); for the convenience of readers we give a proof here.
%
\begin{lemma}\label{lem:ADA-pairwise-lin-ind}
Let $A$ and $D$ be $m \times m$ matrices,
where $A$ is real symmetric
with all columns nonzero and pairwise linearly independent,
and $D$ is positive diagonal.
Then all columns of $A D A$ are nonzero and pairwise linearly independent.
\end{lemma}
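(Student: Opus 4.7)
The plan is to exploit the factorization $ADA = M^T M$ with $M = D^{1/2} A$, so that $ADA$ becomes a Gram matrix of the columns of $M$. Since $A$ is symmetric and $D = D^{1/2} D^{1/2}$ with $D^{1/2}$ positive diagonal, we have $(D^{1/2} A)^T (D^{1/2} A) = A D^{1/2} D^{1/2} A = ADA$. Letting $a_j$ denote the $j$-th column of $A$, the $j$-th column of $M$ is $c_j := D^{1/2} a_j$. Because $D^{1/2}$ is invertible, multiplication by it is a bijection on lines through the origin; so the $c_j$ inherit from the $a_j$ the property of being nonzero and pairwise linearly independent.

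Now the $i$-th entry of the $j$-th column of $ADA = M^T M$ is the ordinary inner product $\langle c_i, c_j \rangle$. The nonzero part is then immediate: the $j$-th diagonal entry is $\langle c_j, c_j \rangle = \|c_j\|^2 > 0$, so column $j$ is nonzero.

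For pairwise linear independence of the columns of $ADA$, I would argue by contradiction. Suppose the $j$-th and $k$-th columns are linearly dependent for some $j \ne k$. Since both are nonzero, there is a nonzero scalar $\lambda$ with $\langle c_i, c_j \rangle = \lambda \langle c_i, c_k \rangle$ for every $i$, i.e.\ $\langle c_i, c_j - \lambda c_k \rangle = 0$ for all $i$. The vector $v := c_j - \lambda c_k$ lies in $\operatorname{span}(c_1, \dots, c_m)$ and is orthogonal to every $c_i$, hence orthogonal to its own span, so $v = 0$. But then $c_j = \lambda c_k$, contradicting the pairwise linear independence of the $c_j$'s.

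There is no real obstacle here; the only thing to be careful about is not to claim that $A$ has full rank (pairwise linear independence of the columns does not imply linear independence when $m \ge 3$), so the argument must go through the Gram-matrix viewpoint rather than invertibility of $A$. The positivity of $D$ is used twice: once to ensure $D^{1/2}$ is a real invertible bijection preserving the pairwise-linear-independence property, and once to guarantee the diagonal entries $\langle c_j, c_j \rangle$ are strictly positive.
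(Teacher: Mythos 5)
Your proof is correct and uses the same key idea as the paper: factor $ADA$ as the Gram matrix $Q^T Q$ with $Q = D^{1/2}A$, and observe that $Q$ inherits from $A$ the property of having nonzero, pairwise linearly independent columns. The only difference is how you close: the paper picks out the $2\times 2$ principal minor on rows/columns $i,j$ of $ADA$, which is the Gram matrix of $\{q_i, q_j\}$, and shows its determinant is strictly positive by Cauchy--Schwarz (strict because $q_i, q_j$ are independent), hence the two columns are independent; you instead argue by contradiction, showing that a dependence between columns $j,k$ would produce a nonzero vector $c_j - \lambda c_k$ lying both in and orthogonal to $\operatorname{span}(c_1,\dots,c_m)$. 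Both closings are clean and correct; the paper's is a bit more local (only the $2\times 2$ submatrix is inspected), while yours has the advantage of showing the positivity of $D$ is used only through invertibility of $D^{1/2}$, not through the full Cauchy--Schwarz inequality. Your caveat about not assuming $A$ has full rank is exactly the right thing to flag.
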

\begin{proof}
The case $m = 1$ is trivial. Assume $m \ge 2$. 
Let $D = \diag(\alpha_i)_{i = 1}^m$, and  $\Pi =  \diag(\sqrt{\alpha_i})_{i = 1}^m$.
Then $\Pi^2 = D$.
We have  $A D A = Q^T Q$, where $Q = \Pi A$.
Let $q_i$ denote the $i$th column of $Q$.
Then $Q$ has pairwise linearly independent columns.
By the Cauchy-Schwartz inequality,
\[
q_i^T q_j < \left( (q_i^T q_i) (q_j^T q_j) \right)^{1 / 2},
\]
whenever $i \ne j$. 
Then for any $1 \le i < j \le m$, the $i$th and $j$th columns of $A D A$
contain a submatrix
\[
\begin{bmatrix}
q_i^T q_i & q_i^T q_j \\
q_i^T q_j & q_j^T q_j
\end{bmatrix},
\]
so they are linearly independent. 
\end{proof}
The following is also adapted from~\cite{Dyer-Greenhill-2000} (Theorem 3.1). 
\begin{lemma}\label{lem:ADA-nondeg-thick}
Let $A$ and $D$ be $m \times m$ matrices,
where $A$ is real symmetric
with all columns nonzero and pairwise linearly independent, 
and $D$ is positive diagonal.
Then for all sufficiently large positive integers 
$p$, the matrix $B = (A D A)^{\odot p}$
corresponding to the edge gadget in Figure~\ref{fig:ADA-to-p-gadget-advanced} is nondegenerate.
\end{lemma}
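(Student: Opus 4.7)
The plan is to show $\det B > 0$ for all sufficiently large $p$ by using strict Cauchy--Schwarz on the entries of $N := ADA$ to make the identity permutation dominate in the Leibniz expansion of $\det(N^{\odot p})$.

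First I would factor out a square root of $D$: write $\Pi := \diag(\sqrt{\alpha_i})$ where $D = \diag(\alpha_i)$, so that $\Pi^2 = D$ and, because $A$ is symmetric,
\[
N \;=\; A\Pi^2 A \;=\; (\Pi A)^T(\Pi A) \;=\; Q^T Q \qquad \text{with } Q := \Pi A.
\]
Since $\Pi$ is nonsingular and the columns of $A$ are nonzero and pairwise linearly independent, so are the columns $q_1,\dots,q_m$ of $Q$. Then $N_{ij} = q_i^T q_j$, so $N_{ii} = \|q_i\|^2 > 0$, and the \emph{strict} Cauchy--Schwarz inequality gives $|N_{ij}| < \sqrt{N_{ii} N_{jj}}$ for every $i \ne j$.

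Next, expanding by Leibniz, $\det B = \sum_{\sigma \in S_m} \operatorname{sgn}(\sigma)\, X_\sigma^{\,p}$ where $X_\sigma := \prod_{k} N_{k,\sigma(k)}$. The term $\sigma = \id$ contributes $X_{\id}^{\,p} = (\prod_k N_{kk})^p > 0$. For any $\sigma \ne \id$ I would bound $|X_\sigma|$ factor-by-factor via the previous paragraph and use that $\sigma$ is a bijection to re-index $\prod_k \sqrt{N_{\sigma(k),\sigma(k)}} = \prod_k \sqrt{N_{kk}}$, obtaining
\[
|X_\sigma| \;<\; \prod_k \sqrt{N_{kk}\,N_{\sigma(k),\sigma(k)}} \;=\; \prod_k N_{kk} \;=\; X_{\id}.
\]
Setting $\rho := \max_{\sigma \ne \id}|X_\sigma|/X_{\id} \in [0,1)$ then yields
\[
\det B \;\ge\; X_{\id}^{\,p}\bigl(1 - (m!-1)\,\rho^{\,p}\bigr),
\]
which is strictly positive as soon as $(m!-1)\rho^{\,p} < 1$, i.e.\ for all sufficiently large $p$. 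Hence $B$ is nondegenerate.

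The main obstacle is ensuring the strict inequality $|X_\sigma| < X_{\id}$ for every $\sigma \ne \id$: one must verify that the single strict Cauchy--Schwarz inequality (at any coordinate $k$ with $\sigma(k) \ne k$) survives the product and is not neutralized by the remaining factors, and in particular handle the degenerate subcase where some $N_{k,\sigma(k)} = 0$ forces the corresponding factor of $|X_\sigma|$ to $0$ while the comparison bound $\sqrt{N_{kk}N_{\sigma(k),\sigma(k)}}$ remains strictly positive (by $N_{ii} > 0$ for all $i$). Once this strictness is nailed down, the dominant-term argument closes the proof.
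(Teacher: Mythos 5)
Your proposal is correct and takes essentially the same approach as the paper: factor $N = ADA = Q^TQ$ with $Q = D^{1/2}A$, invoke strict Cauchy--Schwarz on the linearly independent columns of $Q$, and let the identity term in the Leibniz expansion of $\det(N^{\odot p})$ dominate all others for large $p$. The only cosmetic difference is in the bookkeeping: the paper introduces a per-pair ratio $\gamma := \max_{i<j} |q_i^Tq_j|/\sqrt{(q_i^Tq_i)(q_j^Tq_j)} < 1$, bounds $|X_\sigma| \le \gamma^{t(\sigma)} X_{\id}$ where $t(\sigma)$ counts non-fixed points, and then uses $|\{\sigma : t(\sigma)=j\}| \le m^j$ to sum a geometric series; you instead define a single per-permutation ratio $\rho := \max_{\sigma \ne \id} |X_\sigma|/X_{\id} < 1$ and use the blunt bound $(m!-1)\rho^p$. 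Both yield an explicit $p$ threshold, yours being somewhat larger. You correctly flag and resolve the potential strictness issue (if some $N_{k,\sigma(k)} = 0$ the factor vanishes, and positivity of the diagonal entries $N_{ii}$ keeps the comparison meaningful). One nit: your $\rho$ is a max over $\sigma \ne \id$, so the $m = 1$ case should be dispatched separately (as the paper does), but there it is trivial since $\det B = N_{11}^p > 0$.
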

\begin{proof}
 
If $m = 1$, then any $p \ge 1$ works. Let $m \ge 2$.
Following the proof of Lemma~\ref{lem:ADA-pairwise-lin-ind},
we have
$q_i^T q_j < \sqrt{(q_i^T q_i) (q_j^T q_j)}$,
 for all $1 \le i < j \le m$. Let
\[
\gamma = \max_{1 \le i < j \le m} \frac{q_i^T q_j }{\sqrt{(q_i^T q_i) (q_j^T q_j)}}  < 1.
\]
%
Let $A' = A D A = Q^T Q$ so $A'_{i j} = q_i^T q_j$.
Then $A'_{i j} \le \gamma \sqrt{A'_{i i} A'_{j j}}$ for all $i \ne j$.
Consider the determinant of $A'$.
Each term of $\det(A')$ has the form
\[
\pm \prod_{i = 1}^m A'_{i \sigma(i)},
\]
where $\sigma$ is a permutation of $[m]$.
Denote $t(\sigma) = |\{ i \mid \sigma(i) \ne i \}|$. Then
\[
\prod_{i = 1}^m A'_{i \sigma(i)} \le \gamma^{t(\sigma)} \prod_{i = 1}^m \sqrt{A'_{i i}} \prod_{i = 1}^m \sqrt{A'_{\sigma(i) \sigma(i)}} = \gamma^{t(\sigma)} \prod_{i = 1}^m A'_{i i}.
\]
Consider the  $p$-thickening of $A'$ for $p \ge 1$.
Each term of $\det\left((A')^{\odot p}\right)$ has the form 
$\pm \prod_{i = 1}^m A'^{~p}_{i \sigma(i)}$
for some permutation $\sigma$ of $[m]$.
Now
\[
| \{ \sigma \mid t(\sigma) = j \} | \le \binom{m}{j} j! \le m^j,
\]
for $0 \le j \le m$.
By separating out the identity permutation
and all other terms, for $p \ge \lfloor \ln(2 m) / \ln(1 / \gamma) \rfloor + 1$,
we have $2 m \gamma^p < 1$, and  
\begin{align*}
\det\left((A')^{\odot p}\right) &\ge \left( \prod_{i = 1}^m A'_{i i} \right)^p - \left( \prod_{i = 1}^m A'_{i i} \right)^p \sum_{j = 1}^m m^j \gamma^{p j} \\
&\ge \left( \prod_{i = 1}^m A'_{i i} \right)^p \left( 1 - \frac{m \gamma^p}{1 - m \gamma^p} \right) = \left( \prod_{i = 1}^m A'_{i i} \right)^p \left( \frac{1 - 2 m \gamma^p}{1 - m \gamma^p} \right) > 0.
\end{align*} \end{proof}

\begin{figure}
\centering
\includegraphics{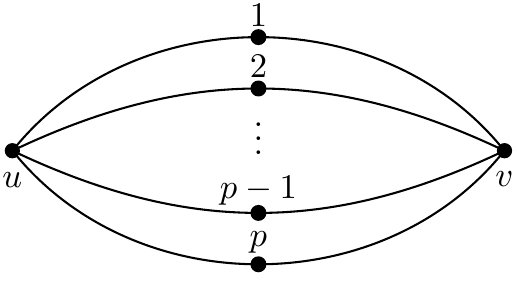} 
\caption{\label{fig:ADA-to-p-gadget-advanced}The edge gadget $S_2 T_p e,\, e = (u, v)$ with the edge weight matrix $(A D A)^{\odot p}$.} 
\end{figure}


\section{Hardness for $Z_A(\cdot)$ on simple graphs for real symmetric $A$}\label{sec:Goldberg-et-al-2010-dichotomy}

There is a more direct approach to prove 
the \#P-hardness part of the Bulatov-Grohe dichotomy
(Theorem~\ref{thm:Bulatov-Grohe}) for  simple graphs.
Although this method does not handle degree-boundedness,
we can apply it more generally to the problem $\EVAL(A, D)$
when the matrix $A$ is real symmetric
and $D$ is positive diagonal.
In particular, we will prove 
the \#P-hardness part of the dichotomy for counting GH
by Goldberg et al.~\cite{Goldberg-et-al-2010} (the problem $\EVAL(A)$ 
without vertex weights, where $A$ is a real symmetric matrix)
for  simple graphs.

We first prove the following theorem.
\begin{theorem}\label{thm:EVAL-simp-interp}
Let $A$ and $D$ be $m \times m$ matrices,
where $A$ is real symmetric and $D$ is positive diagonal.
Then $\EVAL(A, D) \le_{\mathrm T}^{\mathrm P} \EVAL_{\simp}(A, D)$.
\end{theorem}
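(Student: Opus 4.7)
The plan is to reduce the multi-edge case to the simple case by a stretching-based spectral interpolation. Given input $G$, first remove any isolated vertices, since they contribute the independent multiplicative factor $(\sum_i D_i)^h$. For each integer $n \ge 2$, form $G_n := S_n G$: because $n \ge 2$ and $G$ has no loops, any bundle of $k$ parallel edges becomes $k$ paths of length $n$ on pairwise disjoint sets of fresh internal vertices, so $G_n$ is simple. An oracle call to $\EVAL_{\simp}(A,D)$ on $G_n$ returns $Z_{A,D}(G_n) = Z_{M_n,D}(G)$, where $M_n := A(DA)^{n-1}$ by the identity in the preliminaries.

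To set up the interpolation, I orthogonally diagonalize the real symmetric matrix $\tilde A := D^{1/2} A D^{1/2} = U^T \Lambda U$ with $\Lambda = \diag(\lambda_1,\ldots,\lambda_m)$. Then $M_n = D^{-1/2} \tilde A^n D^{-1/2}$, so $(M_n)_{ij} = \sum_\ell \lambda_\ell^n\, T_{i\ell} T_{j\ell}$ for an $n$-independent matrix $T$ with $T_{i\ell} = U_{\ell i}/\sqrt{D_i}$. Substituting into $Z_{M_n,D}(G)$ and expanding each edge factor as an $m$-term sum yields
\begin{equation*}
Z_{A,D}(G_n) \;=\; \sum_{L\colon E(G)\to[m]} \Big(\prod_{e\in E(G)} \lambda_{L(e)}\Big)^{\!n} C_L \;=\; \sum_\mu X_\mu\, \mu^n,
\end{equation*}
where the second form collects terms by the distinct nonzero values $\mu = \prod_\ell \lambda_\ell^{k_\ell}$ over nonnegative tuples $(k_1,\ldots,k_m)$ with $\sum_\ell k_\ell = |E(G)|$, and each $X_\mu$ is independent of $n$. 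The number of distinct $\mu$'s is at most $\binom{|E(G)|+m-1}{m-1}$, polynomial in $|G|$ for fixed $m$.

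I then query the oracle on $G_n$ for $n = 2, 3, \ldots, K+1$, where $K$ is the number of distinct nonzero $\mu$'s, producing a Vandermonde system in the unknowns $X_\mu$ whose base values are pairwise distinct and nonzero; inverting it in polynomial time recovers every $X_\mu$. Crucially, the algebraic expansion $Z_{M_n,D}(G) = \sum_\mu X_\mu \mu^n$ is a purely spectral identity, valid for every $n \ge 0$; since $M_1 = A(DA)^0 = A$, the formal value at $n=1$ equals $Z_{A,D}(G)$, so I output $\sum_\mu X_\mu\, \mu$.

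The main subtlety is to keep the Vandermonde system well-defined. Zero eigenvalues of $\tilde A$ produce tuples with $\mu=0$, which vanish from every equation and from the output, and can simply be discarded. Distinct tuples $(k_\ell)$ that produce the same nonzero $\mu$ are aggregated into a single unknown $X_\mu$ --- this is all the linear system can resolve, and also precisely what is needed to compute $\sum_\mu X_\mu\, \mu$. Because $A$ and $D$ have entries in a fixed algebraic extension of $\mathbb{Q}$, the $\lambda_\ell$ are algebraic reals and equality testing among the algebraic numbers $\mu$ is polynomial-time, so the grouping and inversion are effective; all graphs $G_n$ used have size polynomial in $|G|$, completing the Turing reduction.
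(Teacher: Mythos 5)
Your argument is correct and follows essentially the same route as the paper: stretch the graph to make it simple and loopless, write the path's edge-weight matrix as $D^{-1/2}\widetilde A^{\,n}D^{-1/2}$ with $\widetilde A = D^{1/2}AD^{1/2}$ orthogonally diagonalized, collect terms into a polynomially long sum $\sum_\mu X_\mu\,\mu^n$ over eigenvalue products $\mu$, solve the resulting Vandermonde system after merging repeated columns and discarding $\mu=0$, and evaluate the spectral identity at $n=1$ to recover $Z_{A,D}(G)$. The paper's only cosmetic differences are that it $n$-stretches just the parallel-edge subset $F\subseteq E$ rather than all of $E(G)$, and it reaches the $\mu^n$ expansion via an intermediate stratification by endpoint-value counts $\kappa=(k_{ij})$; your direct expansion over edge-to-eigenvalue labelings $L\colon E\to[m]$ is entirely equivalent.
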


\begin{proof}
We may assume $A$ is not identically $0$, for otherwise the problem is
trivial.
Let $G = (V, E)$ be an input graph to the problem $\EVAL(A, D)$.
For any $n \ge 1$, let $G_n = S_n^{(F)}(G)$ where $F \subseteq E$ is the
subset consisting of the edges of $G$ each of which is parallel 
to at least one other edge. 
In other words, we obtain $G_n$ by replacing every parallel edge $e$
by its $n$-stretching $S_n e$.
We will refer to these as paths of length $n$ in $G_n$.
Note that $G_1 = G$.
Moreover, for every $n \ge 2$, the graph $G_n$ is simple and loopless,
and has polynomial size in the size of $G$ and $n$.

A path of length $n \ge 1$ has the edge weight matrix
\[
M^{(n)}=
\underbrace{A D A \ldots A D A}_{D \text{ appears } n - 1 ~\ge~ 0 \text{ times}} = A (D A)^{n - 1} = D^{-1 / 2} (D^{1 / 2} A D^{1 / 2})^n D^{-1 / 2}.
\]
Here $D^{1 / 2}$ is a diagonal matrix
with the positive square roots
of the corresponding entries
of $D$ on the main diagonal,
and $D^{-1 / 2}$ is its inverse.

Since $A$ is real symmetric and $D$ is positive diagonal,
the matrix $\widetilde A = D^{1 / 2} A D^{1 / 2}$ is real symmetric.
Then $\widetilde A$ is orthogonally diagonalizable over $\mathbb{R}$, i.e.,
there exist a real orthogonal matrix $S$
and a real diagonal matrix $J = (\lambda_i)_{i = 1}^m$ such that $\widetilde A = S^T J S$.
If $A$ has rank $r$, then $1 \le r \le m$,
and we may assume
that $\lambda_i \ne 0$ for $1 \le i \le r$
and $\lambda_i = 0$ for $i > r$.
%
 
We have ${\widetilde A}^n = S^T J^n S$, so the edge weight matrix for a path of length $n \ge 1$ can be written as 
\[
M^{(n)} 
= D^{-1 / 2} {\widetilde A}^n D^{-1 / 2}
= D^{-1 / 2} S^T J^n S D^{-1 / 2}.
\]

We can write $M_{i j}^{(n)} = \sum_{\ell = 1}^r a_{i j \ell} \lambda_\ell^n$
by a formal  expansion,
for every $n \ge 1$ and some real $a_{i j \ell}$'s
that are dependent on $D$ and $S$, but independent of $n$ and $\lambda_\ell$,
where $1 \le i, j \le m$ and $1 \le \ell \le r$. 
%
%
By the formal expansion of the symmetric matrix $M^{(n)}$ above,
we have $a_{i j \ell} = a_{j i \ell}$.
Let $t = |F|$, which is the number of edges in $G$
subject to the stretching operator $S_n$ to form $G_n$.

In the evaluation of the partition function $Z_{A, D}(G_n)$,
we stratify the vertex assignments in $G_n$ as follows.
Denote by  $\kappa = (k_{i j})_{1 \le i \le j \le m}$
a nonnegative tuple with entries indexed by ordered pairs of nonnegative numbers
that satisfy $\sum_{1 \le i \le j \le m} k_{i j} = t$.
Let $\mathcal K$ denote the set of all such possible tuples $\kappa$.
In particular, $|\mathcal K| = \binom{t + m (m + 1) / 2 - 1}{m (m + 1) / 2 - 1}$.
For a fixed $m$, this is a polynomial in $t$, and thus a
polynomial in the size of $G$.
Let $c_\kappa$ 
be the sum over all assignments of all vertex and edge weight products
in $Z_{A, D}(G_n)$,
 except the contributions by the paths of length $n$ formed by
stretching parallel edges in $G$,
such that the endpoints of precisely $k_{i j}$ constituent paths
 of length $n$
receive the assignments $(i, j)$ (in either order of the end points)
 for every $1 \le i \le j \le m$.
Technically we can call a vertex assignment on $G$
consistent with $\kappa$ (where $\kappa \in \mathcal K$),
if it satisfies the stated property.
Note that the contribution by each such path does not include
the vertex weights of the two end points (but does include all vertex weights
of the internal $n-1$ vertices of the path).
We can write
\[
c_\kappa = \sum_{\substack{\xi \colon V(G) \to [m] \\ \xi \text{ is consistent with } \kappa}} \prod_{w \in V} D_{\xi(w)} \prod_{(u, v) \in E \setminus F} A_{\xi(u), \xi(v)}
\]
for $\kappa \in \mathcal K$.

In particular, the values $c_\kappa$ are independent of $n$.
Thus for some polynomially many values $c_\kappa$, where $\kappa
\in \mathcal K$, we have
\[
Z_{A, D}(G_n) = \sum_{\kappa \in \mathcal K} c_\kappa \prod_{1 \le i \le j \le m} (M_{i j}^{(n)})^{k_{i j}}
= \sum_{\kappa \in \mathcal K} c_\kappa \prod_{1 \le i \le j \le m} (\sum_{\ell = 1}^r a_{i j \ell} \lambda_\ell^n)^{k_{i j}}. 
\]
Expanding out the last sum and rearranging the terms, for some
values $b_{i_1, \ldots, i_r}$  independent of $n$, we get
\begin{equation}\label{interpolation-lin-sys-sec4}
Z_{A, D}(G_n)
= \sum_{\substack{i_1 + \ldots + i_r = t \\ i_1, \ldots, i_r \ge 0}} b_{i_1, \ldots, i_r} ( \prod_{\ell = 1}^r \lambda_\ell^{i_\ell} )^n
\end{equation}
for all $n \ge 1$.

This can be viewed as a linear system with the unknowns $b_{i_1, \ldots, i_r}$ with the rows indexed by $n$. 
The number of unknowns is  $\binom{t + r - 1}{r - 1}$
which is polynomial in the size of the input graph $G$, since $r \le m$ is a constant.
The values $\prod_{\ell = 1}^r \lambda_\ell^{i_\ell}$
can all be computed in polynomial time.

We show how to compute the value $Z_{A, D}(G) = \displaystyle \sum_{\scriptsize \substack{i_1 + \ldots + i_r = t \\ i_1, \ldots, i_r \ge 0}} b_{i_1, \ldots, i_r} \prod_{\ell = 1}^r \lambda_\ell^{i_\ell}$,
from the values $Z_{A, D}(G_n)$ where $n \ge 2$ in polynomial time
(recall that $G_n$ is simple and loopless for $n \ge 2$).
The coefficient matrix of the linear system (\ref{interpolation-lin-sys-sec4})
 is a Vandermonde matrix.
However, it might not be of full rank
because the coefficients $\prod_{\ell = 1}^r \lambda_\ell^{i_\ell}$
do not have to be pairwise distinct, and therefore
it can have repeating columns.
Nevertheless, when there are two repeating columns we replace
the corresponding unknowns $b_{i_1, \ldots, i_r}$ and $b_{i'_1, \ldots, i'_r}$
with their sum as a new variable; we repeat this replacement procedure until
there are no repeating columns.
Since all $\lambda_\ell \ne 0$, for $1 \le \ell \le r$,
after the replacement,
 we have a Vandermonde system of full rank.
Therefore we can solve this modified linear system
in polynomial time.
This allows us to obtain the value $Z_{A, D}(G) = \displaystyle \sum_{\scriptsize \substack{i_1 + \ldots + i_r = t \\ i_1, \ldots, i_r \ge 0}} b_{i_1, \ldots, i_r} \prod_{\ell = 1}^r \lambda_\ell^{i_\ell}$, which also has exactly the same pattern of repeating multipliers
$\prod_{\ell = 1}^r \lambda_\ell^{i_\ell}$. 

We have shown how to compute the value $Z_{A, D}(G)$ in polynomial time
by querying  the oracle $\EVAL(A, D)$ 
on polynomially many instances $G_n$, for $n \ge 2$.
It follows that $\EVAL(A, D) \le_{\mathrm T}^{\mathrm P} \EVAL_{\simp}(A, D)$.
\end{proof}


%
We are ready to prove the \#P-hardness part of the dichotomy
by Goldberg et al.~\cite{Goldberg-et-al-2010} (Theorem 1.1) for simple graphs.
Let $A$ be a real symmetric $m \times m$ matrix.
Assuming that $A$ does not satisfy
the tractability conditions of the dichotomy theorem of Goldberg et al.,
the problem $\EVAL(A)$ is \#P-hard.
By Theorem~\ref{thm:EVAL-simp-interp} (with $D = I_m$), $\EVAL(A) \le_{\mathrm T}^{\mathrm P} \EVAL_{\simp}(A)$.
It follows that $\EVAL_{\simp}(A)$ is \#P-hard.

Hence 
the dichotomy theorem by Goldberg et al. 
can improve to apply to simple graphs.
\begin{theorem}
Let $A$ be a real symmetric matrix.
Then either $\EVAL(A)$ is in polynomial time or $\EVAL_{\simp}(A)$ is \#P-hard
(a fortiori, $\EVAL(A)$ is \#P-hard).

Moreover, there is a polynomial time algorithm that,
given the matrix $A$, decides which case of the dichotomy it is. 
\end{theorem}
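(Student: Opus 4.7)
The plan is to deduce this result as a direct corollary of the Goldberg et al.\ dichotomy \cite{Goldberg-et-al-2010} combined with Theorem~\ref{thm:EVAL-simp-interp} specialized to $D = I_m$. The Goldberg et al.\ dichotomy already establishes that, for any real symmetric $A$, either $\EVAL(A)$ is in polynomial time or it is \#P-hard, and moreover the tractable case is recognizable in polynomial time from $A$. The only new content here is to upgrade the hardness side so that it holds on simple graphs.

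First I would invoke the Goldberg et al.\ dichotomy to obtain a tractability criterion on $A$ that is decidable in polynomial time in the size of $A$. If $A$ satisfies this criterion, then $\EVAL(A)$ is in polynomial time for arbitrary input graphs, and in particular the first alternative of the theorem holds; the polynomial-time decision procedure is inherited verbatim from \cite{Goldberg-et-al-2010}. If $A$ fails the criterion, then by \cite{Goldberg-et-al-2010} the problem $\EVAL(A)$ is \#P-hard on (possibly non-simple) input graphs. Next I would apply Theorem~\ref{thm:EVAL-simp-interp} with $D = I_m$, which gives the polynomial-time Turing reduction $\EVAL(A) \le_{\mathrm T}^{\mathrm P} \EVAL_{\simp}(A)$. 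Composing this reduction with the \#P-hardness of $\EVAL(A)$ immediately yields \#P-hardness of $\EVAL_{\simp}(A)$, establishing the second alternative.

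There is no serious obstacle left at this stage, because the interpolation reduction that removes multiedges is precisely the content of Theorem~\ref{thm:EVAL-simp-interp}, and the specialization to $D = I_m$ does not cause any degeneracy in the argument: the matrix $\widetilde A = A$ is still real symmetric, the eigenvalue stratification and Vandermonde inversion step go through verbatim, and the stretched graphs $G_n$ for $n \ge 2$ are loopless and simple. The only mild caveat to check is that the tractability/hardness classification of Goldberg et al.\ itself is unaffected by restriction to simple graphs in its polynomial-time decidability; since the tractability criterion is a structural condition on $A$ alone, this is immediate, and the combined decision procedure is just the one provided by \cite{Goldberg-et-al-2010}. Thus the theorem follows by combining these two ingredients.
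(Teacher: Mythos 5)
Your proof is correct and follows exactly the same route as the paper: apply the Goldberg et al.\ dichotomy to get the tractability/hardness classification and its polynomial-time decidability, then compose the hardness case with the reduction $\EVAL(A) \le_{\mathrm T}^{\mathrm P} \EVAL_{\simp}(A)$ from Theorem~\ref{thm:EVAL-simp-interp} specialized to $D = I_m$. Nothing to add.
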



\begin{remark}
The interpolation argument in Theorem~\ref{thm:EVAL-simp-interp}
works even if $G$ is a multigraph possibly with multiple loops at any vertex
in the following sense.
In Definition~\ref{def:EVAL(A,D)},
we treat the loops of $G$ as edges.
We think of them as mapped to the entries $A_{i i}$
in the evaluation of the partition function $Z_{A, D}$.
However, we need to slightly change the way we define the graphs $G_n$.
In addition to $n$-stretching the parallel edges of $G$,
we also need to $n$-stretch each loop of $G$ (i.e., replacing a loop by
a closed path of length $n$).
Now $F$ is the set of parallel edges \emph{and} loops in $G$.
This way each $G_n = S_n^{(F)}(G)$ for $n \ge 2$ is simple and loopless.
The rest of the proof goes through.
In other words, the statement of Theorem~\ref{thm:EVAL-simp-interp}
extends to a reduction from
the $\EVAL(A, D)$ problem  that allows input $G$ to have multiloops,
to the standard problem $\EVAL_{\simp}(A, D)$ not allowing loops.
\end{remark}

\bibliography{references}

\end{document}